\newtheorem{thm}{Theorem}
\newtheorem{defn}{Definition}
\newtheorem*{thm*}{Theorem}
\newtheorem*{defn*}{Definition}
\newtheorem{lem}{Lemma}
\newtheorem{cor}{Corollary}
\begin{document}
\title{Entanglement classification with matrix product states}
\author{M. Sanz\footnote{Corresponding author: mikel.sanz@ehu.eus}}
\affiliation{Department of Physical Chemistry, University of the Basque Country UPV/EHU, Apartado 644, 48080 Bilbao, Spain}
\author{I. L. Egusquiza}
\affiliation{Department of Theoretical Physics and History of Science, University of the Basque Country UPV/EHU, Apartado 644, 48080 Bilbao, Spain}
\author{R. Di Candia}
\affiliation{Department of Physical Chemistry, University of the Basque Country UPV/EHU, Apartado 644, 48080 Bilbao, Spain}
\author{H. Saberi}
\affiliation{Department of Optics, Faculty of Science, Palack\'{y} University, 17. listopadu 12, 77146 Olomouc, Czech Republic}
\affiliation{Department of Physics and Center for Optoelectronics and Photonics Paderborn (CeOPP), University of Paderborn, Warburger Stra\ss e 100, 33098 Paderborn, Germany}
\author{L. Lamata}
\affiliation{Department of Physical Chemistry, University of the Basque Country UPV/EHU, Apartado 644, 48080 Bilbao, Spain}
\author{E. Solano}
\affiliation{Department of Physical Chemistry, University of the Basque Country UPV/EHU, Apartado 644, 48080 Bilbao, Spain}
\affiliation{IKERBASQUE, Basque Foundation for Science, Maria Diaz de Haro 3, 48013 Bilbao, Spain}

\date{\today}

\begin{abstract}
We propose an entanglement classification for symmetric quantum states based on their diagonal matrix-product-state (MPS) representation. The proposed classification, which preserves the stochastic local operation assisted with classical communication (SLOCC) criterion, relates entanglement families to the interaction length of Hamiltonians. In this manner, we establish a connection between entanglement classification and condensed matter models from a quantum information perspective. Moreover, we introduce a scalable nesting property for the proposed entanglement classification, in which the families for $N$ parties carry over to the $N+1$ case. Finally, using techniques from algebraic geometry, we prove that the minimal nontrivial interaction length $n$ for any symmetric state is bounded by $n \leq \lfloor N / 2 \rfloor + 1$.
\end{abstract}

\pacs{}

\maketitle
 Entanglement is widely considered the cornerstone of quantum information and an essential resource for relevant quantum effects, such as quantum teleportation~\cite{BBCJPW93,BPMEWZ97,RHRHRBLKBSKJB04,BCSBIJKLLOW04}, quantum cryptography~\cite{BB84,GRTZ02}, or the speed-up of quantum computing~\cite{NC00}, as in Shor's algorithm~\cite{Sh97}. Moreover, entanglement is recognized as useful for understanding properties of condensed matter models, such as quantum phases~\cite{OAFF02} and topological orders~\cite{JWB12}, among others. Entanglement based properties are usually challenging to study both experimentally and theoretically, due to the exponential growth of the associated quantum degrees of freedom. Experimentally, due to the exponentially large amount of degrees of freedom typically involved, the detection and the quantification of the entanglement are difficult to achieve. Theoretically, quantities describing the entanglement are generally complicated function of the quantum state, and they are normally arduous to analyse. To overcome these obstacles, advanced quantum information techniques have been successfully applied to answer condensed matter questions~\cite{ECP10, CPW11, EFG15}, shedding a distinct light on the problem. This novel approach may bring about exciting results in many-body physics, and result in a new revolution in physics, where quantum information and matter phenomena can be formally unified~\cite{ZCZW15}.  
 
An important question in quantum information is the classification of entanglement by means of some mathematical or physical equivalence. Classifying entanglement should help in recognizing similarity between different entangled states, and it may be useful to boost the practicabilities of quantum information protocols. A first result is that quantum states connected by SLOCC operations, which perform probabilistically the same quantum tasks, can be collected into entanglement classes, called SLOCC classes, but also known as SLOCC criterion~\cite{DVC00}. Nevertheless, there is an infinite number of SLOCC classes for four or more parties that may be gathered, in turn, into a finite number of entanglement families~\cite{VDdMV02,LLSS06,LLSS07,OS05,BKMGLS09,WDGC13}. Unfortunately, the community has not been able to relate all classes and families to specific properties or quantum information tasks, although a few of them have certainly raised experimental interest~\cite{photonsPan,ionsInnsbruck,cQEDYale,photonsWeinfurter}. It is noteworthy to mention that, up to now,  no general characterization nor classification of entanglement exist for many-body systems. 
 
In this Article, we present an entanglement classification for quantum states induced by their MPS structure, which preserves the SLOCC criterion and is exemplified for the symmetric subspace.  The proposed classification is based on the local properties of the multipartite quantum state. In this sense, it relates entanglement families to the interaction length of Hamiltonians, establishing a connection between entanglement classification and condensed matter. Our proposal is twofold beneficial: on the one hand, it does not result in an infinite number of entanglement classes, considerably simplifying their study; on the other hand, it provides a direct  physical insight to the nonlocality of entanglement classes, given by the interaction length of their parent Hamiltonians.  Additionally, we introduce a scalable nesting property in which the families for $N$ parties carry over to the $N+1$ case.

We focus on the classification of the SLOCC classes corresponding to symmetric states, which are invariant under any permutation of the parties~\cite{TG09}, i.e. \( F |\psi\rangle= |\psi\rangle \) where \( F \) is an exchange operator. This is an interesting subspace, since its dimension grows linearly with the number of parties but, at the same time, it contains many physically relevant states.

A pure state $|\Psi\rangle$ is called \emph{entangled} if it is not \emph{separable}~\cite{HHHH09}, i.e. if it cannot be written as a tensor product $|\Psi\rangle \neq |\Psi_1 \rangle \otimes |\Psi_2\rangle \otimes \cdots \otimes |\Psi_N \rangle$. This definition is indirect and this may be the reason that its quantification for more than two parties is still unsettled. The idea of separability emerges when one tries to identify which states can be generated from other states, defined locally in each subsystem, by using local operations and classical communication among parties. Therefore, these local operations provide a natural criterion to collect quantum states in \emph{entanglement classes} with the same type of entanglement \cite{DVC00}. More specifically, two states belong to the same class if they can be transformed into each other with non-zero probability via SLOCC operations.

The splitting of the Hilbert space into the SLOCC classes is fully understood for bipartite and tripartite qubit systems, even in the nonsymmetric case \cite{DVC00}. There is only one entanglement class for two qubits: the one containing the Einstein-Podolski-Rosen (EPR) state $|\Psi\rangle = \frac{1}{\sqrt{2}} \left ( |00\rangle + |11\rangle \right )$, and two symmetric classes for three qubits:   one represented by the Greenberger--Horne--Zeilinger (GHZ) state $|\Psi \rangle = \frac{1}{\sqrt{2}} \left ( |000\rangle +|111\rangle \right)$, and the second one referenced to the $W$ state $|\Psi\rangle = \frac{1}{\sqrt{3}} \left( |100\rangle + |010 \rangle + |001\rangle \right )$. However, for four or more qubits, the number of SLOCC classes is infinite, and their parametrization grows exponentially with the number of parties, while lacking robustness against experimental errors~\cite{DVC00}. In this sense, SLOCC classification makes the association of classes to specific physical properties difficult. This explains, so far, the lack of experimental interest in producing states beyond the well-known GHZ or $W$, among few others.

\begin{figure*}[t] 
  \centering
  \includegraphics[angle=0,width=\textwidth]{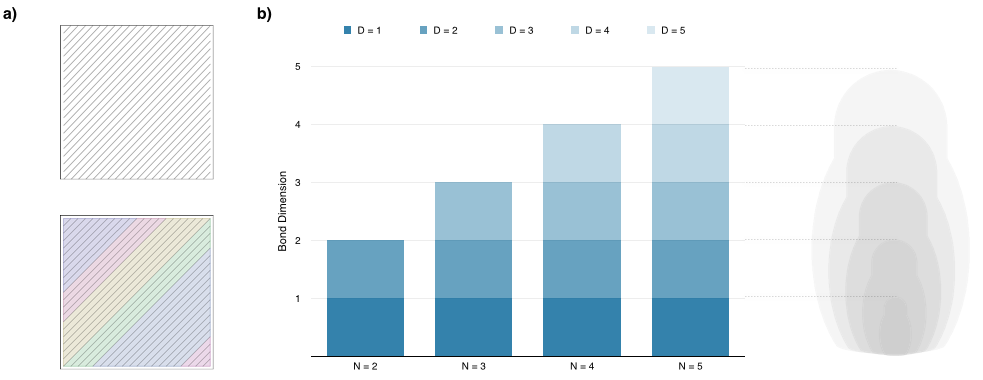}
  \caption{\label{fig}
   \textbf{a}, The SLOCC criterion divides the Hilbert space (the square) in such a way that every quantum state is in a well defined class (the lines). For four or more parties, the number of these SLOCC classes is infinite. However, they may be gathered into families (the colored areas) under certain rules, ideally with more physical associations than mathematical ones. Here, the condition is given by the minimal bond dimension of the matrix-product-state representation of quantum states, relating the MPS classes to the interaction length of parent Hamiltonians. \textbf{b}, The proposed MPS classification enjoys a scalable nesting property in which the classes of an $N$-partite family can be mapped onto the classes of the corresponding $(N+1)$ case, generating a {\it matryoushka structure}. A detailed example is given for the symmetric subspace of arbitrary number of parties.}
\end{figure*}

Due to the natural interest of the SLOCC criterion, it is customary to collect these infinite SLOCC classes into a finite number of larger families based on specific mathematical properties \cite{VDdMV02,LLSS06,LLSS07,WDGC13} or operational approaches \cite{BKMGLS09,BTvZLSA09} (see Fig.~\ref{fig}a). However, up to now, all classifications have failed to associate groups of classes or families to specific quantum information tasks. Since there are infinite SLOCC classes for four or more parties, they can be partitioned into families in an infinite number of ways, and we would not expect all to be relevant. To overcome this conundrum, we propose the following criteria that an SLOCC classification into families should fulfill: 
1) Every SLOCC class must belong to only one family (classes should not cross any border separating families), 2) separable states must be in one family and this family should contain only separable states, 3) SLOCC classes belonging to the same family must show common physical and/or mathematical properties, and 4) the classification into families must be efficient in the sense that (a) the number of families must grow ``slowly" with the number of qubits, and (b) the efforts for classifying $N$ qubits should be useful for classifying $N+1$~(nesting).

In the last decades, matrix product states (MPS) and tensor networks have emerged as a powerful tool to tackle complex problems in many-body systems \cite{PGVWC07,SWPGC09,CSWCPG13,Mi11}. In this sense, any quantum state $|\Psi \rangle = \sum_{i_1, \ldots , i_N =1}^d a_{i_1 \ldots i_N} |i_1 \ldots i_N \rangle$ can be rewritten in a local manner as  $|\Psi \rangle = \sum_{i_1, \ldots , i_N =1}^d A^{[1]}_{i_1} \cdots A^{[N]}_{i_N} |i_1 \ldots i_N \rangle$, where $A^{[k]}_{j_k}$ are matrices containing all local information related to site $k$. This language is convenient to describe ground states of local Hamiltonians \cite{SWPGC09,Mi11,HHS10}, sequential processes \cite{SeqMPS05,SeqOper08,Hamed08}, and systems fulfilling an area law \cite{Ha07}.
In the following, we use the MPS formalism to define an entanglement classification of quantum states into families of SLOCC classes, based on the local dimension of the matrices describing the states, called \emph{bond dimension} (see Fig.~\ref{fig}b). Furthermore, we prove that this MPS classification is directly related to the interaction length of the corresponding parent Hamiltonian \cite{Mi11}. We apply these novel concepts to the case of the symmetric subspace, although they could be extended to more restricted or more general sets of states. We start by highlighting that any symmetric state admits MPS representations with site-independent diagonal matrices. If we further request a minimal bond dimension, we must find the optimal way to represent any symmetric state as $|\Psi^{(N)}\rangle = \sum_{i=1}^D |v_i \rangle^{\otimes N}$. The dimension of the matrix parallels the number of  the Schmidt coefficients, which  has been proposed to quantify entanglement \cite{EB01}.  Hereafter, when we refer to the bond dimension of a symmetric state, we mean the {\it minimal} bond dimension associated with its diagonal representation.

Crucially, SLOCC transformations preserve the minimal bond dimension (see Supplementary Information). Indeed, if $|\Psi_A^{(N)}\rangle$ and $|\Psi_B^{(N)}\rangle$ are two quantum states with bond dimensions $D_A$ and $D_B$ respectively, and $\mathcal{C}$ a class, then
\begin{equation}
|\Psi_A^{(N)}\rangle, \, |\Psi_B^{(N)}\rangle \in \mathcal{C} \Rightarrow D_A = D_B .
\end{equation}
This implies that all states of the same SLOCC class may be represented with the same minimal matrix dimension, which is a SLOCC invariant. In this way, we can define a family of SLOCC classes by means of the following equivalence relation:

\begin{defn*}[Diagonal MPS entanglement classification] Let $S_A$ and $S_B$ be SLOCC classes, and $D_A$, $D_B$ the minimal bond dimension of their respective states in its diagonal MPS representation. We say that $S_A \sim S_B \Leftrightarrow D_A=D_B$, and we call entanglement families the resulting equivalence classes.
\end{defn*}

Notice that the class of separable states can be optimally represented with matrices with bond dimension $D=1$, and, indeed, it coincides with the family $D=1$. Therefore, the diagonal MPS (DMPS) classification, proposed here for the symmetric subspace, fulfills the first two aforementioned conditions. 
Moreover, in this DMPS classification any symmetric state of $N$ qubits can be written with at most bond dimension $N$, so the number of families grows linearly with the number of parties involved. In this sense, the DMPS classification also satisfies a recently proposed tractability criteria~\cite{WDGC13}.

The explicit translational invariance of the MPS formulation leads the DMPS classification to fulfill the aforementioned criterion (4b). Let $\{A_i \}_{i}$ define an $N$-partite symmetric state $|\Psi^{(N)} \rangle = \sum_{i's} {\rm tr} \left ( A_{i_1} \cdots A_{i_N} \right ) |i_1 \ldots i_N \rangle$. Then, the state $|\Psi^{(N+1)} \rangle = \sum_{i's} {\rm tr} \left ( A_{i_1} \cdots A_{i_N} A_{i_{N+1}} \right ) |i_1 \ldots i_N i_{N+1}\rangle$, which lives in a different Hilbert space, namely that of \( N+1 \) parties, does show exactly the same local properties as \( |\Psi^{(N)} \rangle \). As the DMPS classification respects these local properties, and not just for a given state but for the whole SLOCC class, an intriguing~\emph{nesting property} of the families for different number of parties emerges:
\begin{thm*}[Nesting]
Let us consider an $N$-particle symmetric state of qubits $|\psi_N\rangle=\sum_{k=1}^{D(\psi_N)} |x_k\rangle^{\otimes N}$ with optimal bond dimension $D(\psi_N)$, such that $D(\psi_N)\leq \lfloor N/2\rfloor +1$. Then, the state $|\psi_{N+1}\rangle=\sum_{k=1}^{D(\psi_N)}|x_k\rangle^{\otimes (N+1)}$ has optimal bond dimension $D(\psi_{N+1})=D(\psi_N)$.
\end{thm*}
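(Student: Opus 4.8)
\emph{Proof strategy.} The inequality $D(\psi_{N+1})\le D(\psi_N)$ is immediate, since the displayed expression for $|\psi_{N+1}\rangle$ is already a diagonal representation with $D(\psi_N)$ terms; the entire content is the reverse bound $D(\psi_{N+1})\ge D(\psi_N)$. Write $D:=D(\psi_N)$. The plan is to recast the problem in the language of binary forms. Under the standard Majorana/Bargmann identification of the symmetric subspace of $N$ qubits with homogeneous degree-$N$ polynomials in two variables $u,v$, the product state $|x_k\rangle^{\otimes N}$ with $|x_k\rangle\propto a_k|0\rangle+b_k|1\rangle$ is sent to a fixed nonzero multiple of $\ell_k^{N}$, where $\ell_k=a_ku+b_kv$, and the minimal bond dimension of a symmetric state becomes precisely the Waring (symmetric-tensor) rank of the associated form. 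Thus $|\psi_N\rangle\leftrightarrow P_N:=\sum_{k=1}^{D}\ell_k^{N}$ with $\mathrm{rank}(P_N)=D$, and $|\psi_{N+1}\rangle\leftrightarrow P_{N+1}:=\sum_{k=1}^{D}\ell_k^{N+1}$; optimality of $D$ for $|\psi_N\rangle$ forces the $\ell_k$ to be pairwise non-proportional, i.e.\ they define $D$ distinct points $[a_k:b_k]$ of $\mathbb{P}^1$.

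The technical tool is apolarity. For a nonzero $g\in\mathbb{C}[x,y]$ of degree $e\le d$, acting on a degree-$d$ form $F$ by the contraction $g\circ F:=g(\partial_u,\partial_v)F$, one has the identity $g\circ\ell^{d}=c_{d,e}\,g(a,b)\,\ell^{\,d-e}$ for every linear form $\ell=au+bv$, with $c_{d,e}=d!/(d-e)!\ne0$. Two further elementary facts: for $m\ge D-1$ the powers $\ell_1^{m},\dots,\ell_D^{m}$ are linearly independent in $\mathbb{C}[u,v]_m$ (a Vandermonde determinant computation: $D$ points on a rational normal curve of degree $m$ are in linearly general position), and a nonzero binary form of degree $e$ vanishes at no more than $e$ points of $\mathbb{P}^1$. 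The first step is then to show that the apolar ideal of $P_{N+1}$ is trivial in every degree $e\le D-1$: if $g\in\mathbb{C}[x,y]_e$ satisfies $g\circ P_{N+1}=0$ with $e\le D-1$, the identity applied termwise gives $0=g\circ P_{N+1}=c_{N+1,e}\sum_{k=1}^{D}g(a_k,b_k)\,\ell_k^{\,N+1-e}$. Here $N+1-e\ge N+2-D\ge D-1$, the last step being exactly the hypothesis $D\le\lfloor N/2\rfloor+1$ rewritten as $2D\le N+2$; hence $\ell_1^{\,N+1-e},\dots,\ell_D^{\,N+1-e}$ are linearly independent and $g(a_k,b_k)=0$ for all $k$. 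A nonzero degree-$e$ form cannot vanish at $D>e$ distinct points, so $g=0$; that is, $(P_{N+1}^{\perp})_e=0$ for all $e\le D-1$.

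The concluding step uses the apolarity lemma in the converse direction. Suppose $D(\psi_{N+1})=D'$ and pick a minimal diagonal representation $P_{N+1}=\sum_{j=1}^{D'}\mu_j m_j^{N+1}$ with $m_j=\alpha_ju+\beta_jv$; minimality makes the $m_j$ pairwise non-proportional, so $h:=\prod_{j=1}^{D'}(\beta_jx-\alpha_jy)$ is a nonzero form of degree exactly $D'$, and by construction $h(\alpha_j,\beta_j)=0$ for every $j$, whence the termwise identity gives $h\circ P_{N+1}=c_{N+1,D'}\sum_j\mu_j h(\alpha_j,\beta_j)\,m_j^{\,N+1-D'}=0$. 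Thus $h$ is a nonzero element of $(P_{N+1}^{\perp})_{D'}$, which by the previous paragraph forces $D'\ge D$. Combined with $D'\le D$ this yields $D(\psi_{N+1})=D=D(\psi_N)$, which in turn transfers back through the identification to the statement about $|\psi_{N+1}\rangle$.

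I expect the main obstacle to be the first step — pinning down that the apolar ideal of $P_{N+1}$ vanishes in degrees up to $D-1$ — and, within it, tracking exactly where the hypothesis $D\le\lfloor N/2\rfloor+1$ is indispensable: it is precisely what guarantees $N+1-e\ge D-1$, i.e.\ that the $D$ points coming from the decomposition impose independent conditions on the degree-$(N+1-e)$ Veronese, so that $\sum_k g(a_k,b_k)\ell_k^{\,N+1-e}=0$ really does force $g$ to vanish at all of them. If that bound is violated, the $\ell_k^{\,N+1-e}$ may satisfy a linear relation and the argument collapses. The boundary case $N$ even, $D=N/2+1$ (where several inequalities become equalities) should be verified explicitly, but the slack $2D\le N+2<N+3$ in the requirement $N+1-e\ge D-1$ leaves room for it to go through.
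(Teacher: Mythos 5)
Your proof is correct, but it follows a genuinely different route from the paper's. The paper reduces the lower bound $D(\psi_{N+1})\ge D(\psi_N)$ to a Schmidt-rank computation: it picks an integer $M$ with $D\le M+1$ and $D\le N+1-M$ (possible precisely because $2D\le N+2$), invokes its Vandermonde corollary to see that $\{|x_k\rangle^{\otimes M}\}$ and $\{|x_k\rangle^{\otimes(N+1-M)}\}$ are each linearly independent, concludes that the Schmidt rank of $|\psi_{N+1}\rangle$ across that bipartition equals $D$, and finally uses the elementary fact that the Schmidt rank of any bipartition lower-bounds the number of product terms in any diagonal decomposition. Your argument instead passes to binary forms and works with the apolar ideal: you show $(P_{N+1}^{\perp})_e=0$ for $e\le D-1$ and then observe that any $D'$-term decomposition produces a nonzero apolar form of degree $D'$. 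The two arguments are close cousins --- your degree-$e$ apolarity statement is equivalent to the full-rank condition on the symmetric flattening (catalecticant) at the bipartition $e\,|\,N+1-e$, both rest on the same Vandermonde independence of $\ell_1^m,\dots,\ell_D^m$ for $m\ge D-1$, and both consume the hypothesis $D\le\lfloor N/2\rfloor+1$ in the same place (guaranteeing $N+1-e\ge D-1$). What your version buys is a cleaner and more standard framing (Sylvester-style apolarity for Waring rank of binary forms), plus the sharper byproduct that the apolar ideal of $P_{N+1}$ begins in degree exactly $D$; what the paper's version buys is self-containedness at the level of elementary linear algebra, reusing its Lemma on pairwise linear independence and avoiding the polynomial dictionary altogether. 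Your boundary-case check ($N$ even, $D=N/2+1$) is right: the needed inequality is $2D\le N+3$, which holds with a unit of slack.
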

See Supplementary Information for the proof. This theorem shows that, from the perspective of the local properties, the only purely $N$-partite states are the ones whose optimal bond dimension is larger than the maximal bond dimension of any state with $N-1$ parties. This generates a \emph{matryoushka structure} depicted in Fig.~\ref{fig}b, where, unlike other entanglement classifications, the classification for $N$ parties is connected with the classification for $N+1$, for all $N$. We believe that a further exploitation of this scalable nesting property would be interesting in the many-body case, where the exact number of particles is usually not relevant. 

Lastly, in the MPS formalism, the role of parent Hamiltonians for a given state comes to the fore. Namely, for any given state, one can construct a local Hamiltonian which includes it in its ground eigenspace. The MPS representation informs us about the features and interaction length of this construction.  For instance, it controls whether it is the only ground state or there is a spontaneous symmetry breaking \cite{PGVWC07}, or the inheritance of local and global symmetries \cite{SWPGC09,Mi11}. This constructive method works as follows: Let $|\Psi^{(N)}\rangle$ be a symmetric pure state of $N$ parties with an MPS representation with bond dimension~$D$. The reduced density matrices for $n \leq N$ are defined by $\rho^{(n)} = {\rm tr}_{N-n} \left ( |\Psi^{(N)}\rangle \langle \Psi^{(N)}| \right )$, i.e. by tracing out \( N-n \) parties. By construction, ${\rm rank} (\rho^{(n)}) \leq D$, so when $n >  \log_2 D$, $\rho^{(n)}$ has a kernel. Let $h^{(n)} \geq 0$ be the projector onto this kernel and $H = \sum_{i=1}^N h_i^{(n)}$ the total Hamiltonian. Thus, $H$ is a positive operator and $|\Psi^{(N)}\rangle$ is in its ground manifold since $H|\Psi^{(N)}\rangle = 0$. Additionally, the interaction length of $H$ is $n$. 
In order to apply this construction to the DMPS classification, notice that all reduced density matrices for more than one party have nontrivial kernel when acting on the full \( n \)-qubit Hilbert space, since their support is restricted to symmetric states. We must then consider the relevant kernel in the corresponding symmetric subspace of dimension \( n+1 \). Clearly, if the bond dimension is \( D \), the rank of \( \rho^{(n)} \) is at most \( D \) in the symmetric space as well. Thus, we have ensured the existence of a relevant parent Hamiltonian with interaction length \( n \), if \( n\geq D \). However, techniques from algebraic geometry allow us to prove that the minimal nontrivial interaction length for any symmetric state is bounded as $n\leq\lfloor{N/2}\rfloor+1$ (see Supplementary Information). We shall see from the examples below, indeed, that interaction lengths shorter than $D$ do arise. Notice that in our construction we propose families of parent Hamiltonians, for all of which all the states in an entanglement family are ground states. In particular, the ground manifold is always degenerate in our construction. To put this statement in context, bear in mind that the concept of parent Hamiltonian of a state is a general one: a local Hamiltonian which includes the state in its ground manifold . A particularly well established technique for constructing parent Hamiltonians for states or families of states is given by the MPS construction. In that case, the uniqueness of the ground state of this MPS Hamiltonian will depend on certain properties of the corresponding matrix. In particular on the property (or otherwise) of injectivity, such that a lower bound for the interaction length is given by a polynomial in $D$ and $d$ \cite{wielandt}. A particular example of this for the GHZ state may be found in Ref.~\cite{facchi}

{\it i)} {\it $GHZ$ states.---}  The GHZ states may be represented, independently of \(N \), by the matrices \( A_0= \mathrm{diag}\{1,0\} \) and \( A_1= \mathrm{diag}\{0,1\} \). The corresponding parent Hamiltonian family reads
\begin{equation*}
H_{\rm GHZ}=J \sum_{i=1}^N \boldsymbol{ \sigma}_i\cdot \boldsymbol{ \sigma}_{i+1}- J_z\sum_{i=1}^N \sigma_i^z \sigma_{i+1}^z\,,
\end{equation*}
with the conditions \( J_z>0 \) and \( J_z>2J \). 

{\it ii)} {\it $W$ states.---} The \( W_N \) states generalize to \( N \) parties the structure of the three-partite \( W \) state, and they correspond to Dicke states with just one excitation. They can be represented with bond dimension \( D(W_N)=N \) by the following sequence of diagonal matrices \( A_0= \lambda\, \mathrm{diag}\{0,1,e^{i\varphi},\ldots,e^{ i k\varphi},\ldots, e^{ i(N-2)\varphi}\}\), with \( \varphi=2\pi/N(N-1) \), and \( A_1= \mu \,\mathrm{diag}\{(1-N)^{1/N},1,e^{i(N-1)\alpha},\ldots, e^{i (N-k) \alpha},\ldots, e^{2i \alpha}\} \), where \( \alpha=2\pi/N \). Then, we propose the parent Hamiltonian family   
\begin{equation*} 
H_W= J \sum_{i=1}^N \left[-2 \sigma_i^z+ \sigma_i^z \sigma_{i+1}^z- \gamma \boldsymbol{ \sigma}_i\cdot \boldsymbol{ \sigma}_{i+1}\right]\, , \end{equation*}
where both \( J \) and \(\gamma\) need to be positive. Notice that, even though the $W$ state requires a bond dimension $N$ to be represented with diagonal matrices, we can find a Hamiltonian with interaction length $2$ which has this state as a ground state. 

{\it iii)} {\it $X_N$ states.---} Let us consider the family for \( N\geq 4 \) given by
\begin{equation*} 
|X_N(z)\rangle= (N-1)\left[|1\rangle^{\otimes N}+ z^{N-1} \sqrt{N} |W_N\rangle\right]\, ,
\end{equation*} 
up to normalization. The DMPS representation of this state is given by the $(N-1)$-dimensional matrices  $A_0=z\, \mathrm{diag}\left\{1,e^{2\pi i/(N-1)},e^{4\pi i/(N-1)}\ldots,e^{2\pi i(N-2)/(N-1)}\right\}$ and $A_1= \mathrm{diag}\left\{1,1,1,\ldots1\right\}$. In this case, the relevant interaction length is seen to be $3$. In fact, the parent Hamiltonian family reads
\begin{eqnarray*}
&H_X  = \frac{J_z}{3}\sum_{i=1}^N \left ( 3 +3 \sigma_i^z \sigma_{i+1}^z \sigma_{i+2}^z - \sigma_i^z \sigma_{i+1}^z- \sigma_i^z \sigma_{i+2}^z \right.\\
& \left.\ - \sigma_{i+1}^z \sigma_{i+2}^z - \sum_{j=0}^2\sigma_{i+j}^z\right) + \frac{J}{4}\sum_{i=1}^N \left(1-\boldsymbol{ \sigma}_i\cdot\boldsymbol{ \sigma}_{i+1}\right)\,,
\end{eqnarray*}
for any $J>0$ and $J_z>0$.

In conclusion, the proposed MPS classification fulfills all the aforementioned criteria for a versatile grouping of SLOCC classes, while maintaining a linear growth in the number of families with respect to the number of parties. At the same time, the unveiled nesting property allows us to use, in a scalable manner, the effort invested in the MPS classification for $N$ parties in the subsequent $N+1$ case. Additionally, we have provided a physical meaning for the proposed MPS classes by connecting them to paradigmatic properties in condensed matter. This missing link in the theory of entanglement classification has been exemplified for the case of the symmetric subspace, where the DMPS representation was used. It is noteworthy to mention that the provided DMPS classification can be formally extended from the symmetric subspace to more general sets of quantum states. We believe that MPS-based entanglement classifications will be able to connect mathematical aspects already known in quantum information theory with relevant physical features in many-body systems of experimental interest, such as quantum state preparation or the emergence of fractional magnetizations.

\section*{AUTHOR CONTRIBUTIONS}
M.S., as the first author, has been responsible for the development of this work. M.S. and I.L.E., with the support of R.D.C. and H.S., have made the mathematical demonstrations, carried out calculations, and provided examples. L. L. and E. S. suggested the seminal ideas that were improved by M.S. and I.L.E. All authors have carefully proofread the manuscript. E.S. proposed the project and supervised it throughout all stages.

\section*{ACKNOWLEDGMENTS}
We thank G\'eza T\'oth, G\'eza Giedke, Rom\'an Or\'us, Norbert Schuch and Jens Siewert for useful discussions.
The authors acknowledge support from Spanish MINECO FIS2012-36673-C03-02; Ram\'on y Cajal Grant RYC-2012-11391; UPV/EHU UFI 11/55 and EHUA14/04; Basque Government IT472-10 and IT559-10; PROMISCE, and SCALEQIT EU projects; European Social Fund; and the state budget of the Czech Republic, Project No. CZ.1.07/2.3.00/30.0041.
 
\section*{ADDITIONAL INFORMATION} 
The authors declare no competing financial interests.

\appendix

\section{The symmetric subspace and SLOCC classes}
We consider the symmetric subspace of \( N \)-particle systems, i.e. \( \mathrm{Sym}\left( \mathcal{H}^{\otimes N}\right) \).  For the sake of clarity, we say that a state is symmetric if and only if \( F |\psi\rangle= |\psi\rangle \), where \( F \) is a  flip or exchange operator, or the representation of any element of the permutation group. If the one-particle Hilbert space $\mathcal{H}$ has dimension \( d \) (i.e. qudits), we have
\[
\mathrm{dim}\left[\mathrm{Sym}\left( \mathcal{H}^{\otimes N}\right) \right]= {{N+d-1}\choose N}\,.
\]
In particular, the dimension of the symmetric subspace in an $N$-qubit system is \( N+1 \). Our goal is to organise in some physically relevant way the orbits of the group of invertible local operator (ILO) transformations, i.e. the orbits under transformations of the form
\begin{equation}
|\psi_N\rangle\to A^{\otimes N}|\psi_N\rangle\,, 
\end{equation}
where \( A \) is an invertible local operator acting on \( \mathcal{H} \). The reason for considering orbits under the group of ILOs is that stochastic local operators and classical communication operators (SLOCC) are constructed as ILOs~\cite{DVC00}. Notice that, in the symmetric case, we only need to consider ILOs of the form above, i.e. such that the invertible local operator acting on one party is the same for all parties~\cite{PhysRevA.81.052315}. 
 
\section{Diagonal MPS representation}
The main result of the paper relies on the diagonal matrix product state (DMPS) representation of a symmetric state. In this section, we give two independent proofs that any symmetric state can be written in this form.

\subsection{First proof: Majorana representation}
Our starting point is the \emph{Majorana} representation: for  \( N \)-qubit systems any symmetric state can be written as
\begin{equation}\label{eq:Majorana}
|\psi_N\rangle\propto \sum_{1\leq {i_1}\neq {i_2}\neq\cdots\neq {i_N}\leq N}|e_{i_{1}}\rangle\otimes|e_{i_{2}}\rangle\otimes\cdots\otimes|e_{i_{N}}\rangle\,,
\end{equation}
where 
\begin{equation}
|e_i\rangle= \alpha_i|0\rangle+ \beta_i|1 \rangle
\end{equation} are single qubit states. In the case of qubits, this decomposition is unique up to rearrangement of indices. Then, Eq.  (\ref{eq:Majorana}) can be rewritten as 
\begin{equation}
|\psi_N\rangle\propto \sum_{\sigma \in P_N} \sum_{k=0}^{N-1} \frac{(-1)^k}{(N-k)! k!} \left ( \sum_{l=1}^{N-k} |e_{\sigma(l)} \rangle \right )^{\otimes N}\,.
\end{equation}
This state is of the form
\begin{equation}\label{eq:diagonalstate}
|\psi_N\rangle\propto \sum_{k=1}^D |x_k\rangle^{\otimes N}\,,
\end{equation}
for some $D$, that, as we will see, can be easily cast in a diagonal MPS representation.

\subsection{Second proof: Vandermonde determinant}\label{ssec:vandermonde}
We give now an alternative proof of the existence of diagonal presentations (\ref{eq:diagonalstate}) for qubits. The dimension of the symmetric subspace is \( N+1 \), thus, if we can show that \( N+1 \) linearly independent vectors of the form \( |x_k\rangle^{\otimes N} \) exist, we have shown that any state can be written as their linear combination. 
We will find useful in what follows to relate states to their expansion in terms of Dicke states:
\begin{equation}
|\psi_N\rangle=\sum_{k=0}^{N} d_k |D_k^{(N)}\rangle\,,
\end{equation}
where these qubit Dicke states are defined by the number of ``excitations" \( k \), i.e., in the standard basis \( \{|0\rangle,|1\rangle\} \), the number of times \( |1\rangle \) appears, and they are normalised.

First, we prove the following statement:
\begin{lem}\label{le:diagMPS}
The set of \( N \)-particle states \( \{|x_k\rangle^{\otimes N}\}_{k=1}^{N+1} \) is linearly independent and generates the whole symmetric space if and only if the one-particle states \( \{|x_k\rangle\}_{k=1}^{N+1} \) are \emph{pairwise linearly independent} one particle states. 
\end{lem}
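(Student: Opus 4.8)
The plan is to reduce linear independence of the $N$-fold symmetric powers to the non-vanishing of a single $(N+1)\times(N+1)$ determinant, and then to evaluate that determinant in closed form as a product of $2\times 2$ minors built from the one-particle coefficients. First I would write $|x_k\rangle=\alpha_k|0\rangle+\beta_k|1\rangle$ for $k=1,\dots,N+1$ and expand each tensor power in the normalized Dicke basis, $|x_k\rangle^{\otimes N}=\sum_{j=0}^{N}\binom{N}{j}^{1/2}\alpha_k^{N-j}\beta_k^{j}\,|D_j^{(N)}\rangle$. Since the symmetric subspace has dimension $N+1$, the family $\{|x_k\rangle^{\otimes N}\}_{k=1}^{N+1}$ is linearly independent if and only if it spans, and this is equivalent to invertibility of the coefficient matrix $M_{kj}=\binom{N}{j}^{1/2}\alpha_k^{N-j}\beta_k^{j}$. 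Pulling the strictly positive factors $\binom{N}{j}^{1/2}$ out column by column, this is in turn equivalent to $\det\big(\alpha_k^{N-j}\beta_k^{j}\big)_{k=1,\dots,N+1;\,j=0,\dots,N}\neq 0$.

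The heart of the argument is the bihomogeneous Vandermonde-type identity
\[
\det\big(\alpha_k^{N-j}\beta_k^{j}\big)_{k,j} \;=\; \prod_{1\le k<l\le N+1}\big(\alpha_k\beta_l-\alpha_l\beta_k\big)\,.
\]
I would prove it by first restricting to the Zariski-open set where all $\alpha_k\neq 0$: there, dividing row $k$ by $\alpha_k^{N}$ and setting $t_k=\beta_k/\alpha_k$ turns the left-hand side into $\prod_k\alpha_k^{N}$ times the ordinary Vandermonde determinant $\det(t_k^{j})=\prod_{k<l}(t_l-t_k)$; writing $t_l-t_k=(\alpha_k\beta_l-\alpha_l\beta_k)/(\alpha_k\alpha_l)$ and bookkeeping the powers of $\alpha_k$ (each index occurs in exactly $N$ of the $\binom{N+1}{2}$ pairs, so the denominators cancel $\prod_k\alpha_k^{N}$ exactly) collapses everything to the stated product. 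Both sides are polynomials in the $2(N+1)$ variables $\alpha_k,\beta_k$ that agree on a dense open set, hence coincide identically. (Equivalently, one may invoke the classical fact that any $N+1$ distinct points on the degree-$N$ rational normal curve in $\mathbb{P}^{N}$ are linearly independent; the displayed identity is just its quantitative form.)

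Finally I would read off the lemma: the factor $\alpha_k\beta_l-\alpha_l\beta_k$ vanishes precisely when $(\alpha_k,\beta_k)$ and $(\alpha_l,\beta_l)$ are proportional, i.e.\ when $|x_k\rangle$ and $|x_l\rangle$ are linearly dependent — and this also correctly handles the degenerate case in which some $|x_k\rangle=0$, which makes both sides of the equivalence fail. Hence $\det M\neq 0$, i.e.\ $\{|x_k\rangle^{\otimes N}\}_{k=1}^{N+1}$ is linearly independent and spanning, if and only if every pair $|x_k\rangle,|x_l\rangle$ with $k\neq l$ is linearly independent, which is exactly the claim. The only genuine obstacle is the determinant identity itself; the polynomial-continuity argument makes it routine, but one must keep the Dicke normalization straight (so the discarded column factors are nonzero constants) and be careful with the power-counting when clearing denominators. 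The dimension count and the ``linearly independent $\Leftrightarrow$ spanning'' step are immediate.
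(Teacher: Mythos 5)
Your proof is correct and rests on the same core idea as the paper's: expand each $|x_k\rangle^{\otimes N}$ in the Dicke basis, strip off the nonzero binomial factors, and recognize the resulting $(N+1)\times(N+1)$ coefficient matrix as Vandermonde-type. The one genuine difference is in how degenerate coefficients are handled. The paper factors each row as $\beta_k^N$ times powers of the ratio $\alpha_k/\beta_k$ and uses the ordinary Vandermonde determinant, which forces a separate case analysis when some $\beta_k=0$ (at most one may vanish without immediately breaking pairwise independence), settled by a cofactor expansion of a modified matrix $\tilde V$. You instead establish the bihomogeneous identity $\det\bigl(\alpha_k^{N-j}\beta_k^{j}\bigr)=\prod_{k<l}(\alpha_k\beta_l-\alpha_l\beta_k)$ once and for all by polynomial continuation from the dense open set where all $\alpha_k\neq0$; since the right-hand side vanishes exactly when some pair of one-particle states is proportional, the equivalence drops out with no case split. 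Your power counting is right (each index lies in exactly $N$ of the $\binom{N+1}{2}$ pairs, so $\prod_k\alpha_k^{N}$ cancels exactly), and the density argument is sound because both sides are polynomials in the $\alpha_k,\beta_k$. The net effect is a cleaner, uniform proof of the same lemma; the paper's version buys explicit, elementary determinant manipulations at the cost of the two-case structure.
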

\begin{proof}
Let us write the one-site states in the standard basis:
\begin{equation}
|x_k\rangle= \beta_k |1\rangle + \alpha_k|0\rangle
\end{equation}
with \( | \alpha_k|^2+| \beta_k|^2=1 \). At most one of the \( \beta_k \)'s can be zero, since otherwise we would have the state \( |1\rangle \) repeated, and the set \( \{|x_k\rangle\}_{k=1}^{N+1} \) would not be pairwise linearly independent. On the other side of the equivalence, if more than one of the \( \beta_k \)'s are zero, the set of one-particle states  \( \{|x_k\rangle\}_{k=1}^{N+1} \) are not pairwise linearly independent. Let us assume first that none of the \( \beta_k \)'s are zero. 

The \( N \)-site state can be written as
\begin{eqnarray}
|x_k\rangle^{\otimes N}&=& \left(\beta_k |1\rangle + \alpha_k|0\rangle\right)^{\otimes N} \nonumber \\ 
&=& \sum_{n=0}^N \sqrt{{N\choose n}}\, \beta_k^n\alpha_k^{N-n} |D_n^{(N)}\rangle\,.
\end{eqnarray}
Linear independence of the set \( \{|x_k\rangle^{\otimes N}\}_{k=1}^{N+1} \) means that the only set of \( N+1 \) coefficients $\lambda_k$ such that \( \sum_{k=1}^{N+1} \lambda_k|x_k\rangle^{\otimes N}=0 \) is $\lambda_k=0$ $\forall k$. However, any linear superpositions of   \( \{|x_k\rangle^{\otimes N}\}_{k=1}^{N+1} \)  can be written in terms of the Dicke states as
\begin{equation}
\sum_{k=1}^{N+1} \lambda_k|x_k\rangle^{\otimes N}= \sum_{n=0}^N \gamma_n |D_n^{(N)}\rangle\,,
\end{equation}
with the following relation between the \(\{\gamma_i\}\) and the \(\{\lambda_i\}\) coefficients:
\begin{equation}
\gamma_n= \sqrt{{N\choose n}}\, \left(\sum_{k=1}^{N+1}\beta_k^n\alpha_k^{N-n} \lambda_k\right)\,.
\end{equation}
This means
\begin{eqnarray}
\begin{pmatrix}
\gamma_0\\ \gamma_1\\ \vdots\\ \gamma_N
\end{pmatrix}
&=& S V B 
\begin{pmatrix}
\lambda_1\\ \lambda_2\\ \vdots \\ \lambda_{N+1}
\end{pmatrix},
\end{eqnarray}
where $S=\mathrm{diag}\left(1, \sqrt{{N \choose 1}}, \sqrt{ {N \choose 2}},\ldots,1\right)$, $B=\mathrm{diag}\left(\beta_1^N, \beta_2^N, \ldots,\beta_{N+1}^N\right)$, and 
\begin{eqnarray}
V=
\begin{pmatrix}
\left(\frac{\alpha_1}{\beta_1}\right)^N& \left(\frac{\alpha_2}{\beta_2}\right)^N &\cdots & \left(\frac{\alpha_{N+1}}{\beta_{N+1}}\right)^N\\
\left(\frac{\alpha_1}{\beta_1}\right)^{N-1}&\left(\frac{\alpha_2}{\beta_2}\right)^{N-1}&\cdots & \left(\frac{\alpha_{N+1}}{\beta_{N+1}}\right)^{N-1}\\ 
\vdots & \vdots & \ddots&\vdots\\
1& 1 &\cdots & 1
\end{pmatrix}.
\end{eqnarray}
The condition for linear independence of the set \( \{|x_k\rangle^{\otimes N}\}_{k=1}^{N+1} \) is thus equivalent to the condition for $V$ to be invertible. Notice that $V$ is a Vandermonde matrix, and its determinant is
\begin{equation}
|V|= \prod_{1\leq k< l \leq N+1} \left[ \frac{ \alpha_l}{ \beta_l}-\frac{ \alpha_k}{ \beta_k} \right]\,,
\end{equation}
which is different from zero if and only if the set \( \{|x_k\rangle\}_{k=1}^{N+1} \) is pairwise linearly independent. We have thus proven the lemma in the case that none of the \( \beta_k \) coefficients are zero.

Let us now consider the possibility that one of the \( \beta \) is zero, e.g. $\beta_1=0$.  Then, we have w.l.o.g. \(\alpha_1=1\), and
\begin{eqnarray}
\begin{pmatrix}
\gamma_0\\ \gamma_1\\ \vdots\\ \gamma_N
\end{pmatrix}&=&  S\tilde V \tilde B
\begin{pmatrix}
\lambda_1\\ \lambda_2\\ \vdots \\ \lambda_{N+1},
\end{pmatrix},
\end{eqnarray}
where $\tilde B=\mathrm{diag}\left(1, \beta_2^N, \ldots,\beta_{N+1}^N\right)$ and 
\begin{eqnarray}
\tilde V=
\begin{pmatrix}
1& \left(\frac{\alpha_2}{\beta_2}\right)^N &\cdots & \left(\frac{\alpha_{N+1}}{\beta_{N+1}}\right)^N\\
0&\left(\frac{\alpha_2}{\beta_2}\right)^{N-1}&\cdots & \left(\frac{\alpha_{N+1}}{\beta_{N+1}}\right)^{N-1}\\ 
\vdots & \vdots & \ddots&\vdots\\
0& 1 &\cdots & 1
\end{pmatrix}.
\end{eqnarray}
Linear independence of the set at hand is equivalent to the invertibility of $\tilde V$, which reads
\begin{eqnarray}
|\tilde V| & = \begin{vmatrix}
\left(\frac{\alpha_2}{\beta_2}\right)^{N-1}&\cdots & \left(\frac{\alpha_{N+1}}{\beta_{N+1}}\right)^{N-1}\\ 
\vdots & \ddots&\vdots\\
1 &\cdots & 1
\end{vmatrix} \nonumber \\
&=\prod_{2\leq k< l \leq N+1} \left[ \frac{ \alpha_l}{ \beta_l}-\frac{ \alpha_k}{ \beta_k} \right]\,.
\end{eqnarray}
Since we have required that all \(\beta\) be different from zero, save for \(\beta_1\), we recover again the requirement that the one site vectors  \( \{|x_k\rangle\}_{k=1}^{N+1} \) be pairwise linearly independent, thus completing the proof.
\end{proof}

\begin{cor}\label{cordiag}
If \(   M+1 \geq D \), then  the set of $M$-particle states $\{|x_k\rangle^{\otimes M}\}_{k=1}^D$ is linearly independent if and only if the one-particle states $\{| x_k\rangle\}_{k=1}^D$ are pairwise linearly independent.
\end{cor}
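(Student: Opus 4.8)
The plan is to obtain Corollary~\ref{cordiag} directly from Lemma~\ref{le:diagMPS} by an extension argument, so that no new computation is required; the point is simply that linear independence, unlike spanning, is inherited by subsets.

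First I would settle the easy implication. If $\{|x_k\rangle^{\otimes M}\}_{k=1}^D$ is linearly independent, then no two of these tensor powers are proportional; since $|x_l\rangle=c\,|x_k\rangle$ would give $|x_l\rangle^{\otimes M}=c^{M}|x_k\rangle^{\otimes M}$, the one-particle states $\{|x_k\rangle\}_{k=1}^D$ must be pairwise linearly independent. This direction does not use the hypothesis $M+1\geq D$.

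For the converse, assume $\{|x_k\rangle\}_{k=1}^D$ is pairwise linearly independent. Since $D\leq M+1$, I would enlarge this collection to a family $\{|x_k\rangle\}_{k=1}^{M+1}$ that is still pairwise linearly independent: at each step only finitely many one-particle directions are forbidden (those proportional to a previously chosen vector), while the one-qubit projective line contains infinitely many directions, so an admissible new vector always exists. Lemma~\ref{le:diagMPS} applied to the enlarged family shows that $\{|x_k\rangle^{\otimes M}\}_{k=1}^{M+1}$ is linearly independent, and hence so is its subset $\{|x_k\rangle^{\otimes M}\}_{k=1}^{D}$, which is the claim.

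The only place the hypothesis $M+1\geq D$ enters is this extension step, which is exactly what lets $D$ pairwise independent directions fit inside the $(M+1)$-dimensional symmetric subspace; for $D>M+1$ the statement is plainly false. If one wants a self-contained route instead, I would rerun the computation in the proof of Lemma~\ref{le:diagMPS}: it still yields $(\gamma_0,\dots,\gamma_M)^{\top}=S\,V\,B\,(\lambda_1,\dots,\lambda_D)^{\top}$ with $V$ now the rectangular $(M+1)\times D$ Vandermonde matrix in the distinct ratios $\alpha_k/\beta_k$ (and the analogous $\tilde V$ if some $\beta_k=0$), any $D$ of whose rows give a nonzero Vandermonde determinant, so $V$ has full column rank $D$, $SVB$ is injective, and $\sum_k\lambda_k|x_k\rangle^{\otimes M}=0$ forces all $\lambda_k=0$. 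I do not expect a real obstacle here: the entire substance is in Lemma~\ref{le:diagMPS}, and the corollary only records that dropping to $D<M+1$ vectors costs the spanning property but not linear independence.
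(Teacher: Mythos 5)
Your proof is correct and matches the paper's intent: the corollary is stated there without proof as an immediate consequence of Lemma~\ref{le:diagMPS}, and your extension-then-restrict argument (enlarge the $D$ pairwise independent directions to $M+1$ of them, apply the lemma with $N=M$, and pass to the subset) is exactly the kind of completion the paper leaves implicit. One small caveat on your alternative self-contained route: for complex nodes it is \emph{not} true that any $D$ rows of the rectangular Vandermonde matrix have nonzero determinant (e.g.\ exponents $\{0,2\}$ with nodes $\pm1$), so to establish full column rank you should select the $D$ consecutive rows with exponents $0,1,\dots,D-1$, which do form a genuine Vandermonde matrix in the distinct ratios.
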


It follows that any symmetric state can be written in the form of Eq.~\eqref{eq:diagonalstate} with $D\leq N+1$, where  the vectors $\{|x_{k\rangle}\}_{k=1}^{N+1}$ are pairwise linearly independent.
This gives us also a bound for the \emph{optimal} $D$, i.e. for the minimal \( D \) in the decomposition of a state. In what follows, we use  \( D(\psi) \) to denote the optimal \( D \) of the state \( |\psi\rangle \), if more than one state is being discussed. Otherwise, we normally denote the optimal value by  \( D \) without further qualification. Thus, as a corollary of Lemma \ref{le:diagMPS} we conclude that for all \( N \) qubit symmetric states \( D\leq N+1 \).

\subsection{Diagonal Matrix Product States (DMPS)}
In the two previous subsections, we have shown that a representation of the form given by Eq.~(\ref{eq:diagonalstate}) is always possible for any symmetric state of \( N \) qubits. In this subsection, we show that this is equivalent to a diagonal MPS representation.

Any quantum state $|\Phi \rangle \in \mathbb{C}^{2^N}$ admits an MPS representation given by 
\begin{equation}\label{eq:MPS}
|\Phi \rangle = \sum_{\mu_1, \ldots , \mu_N  = 1}^2 \mathrm{Tr}\, [A_{\mu_1}^{[1]} \cdots A_{\mu_N}^{[N]}]  \, |\mu_1 \ldots \mu_N \rangle
\end{equation}
where the set of complex matrices $\mathcal{K} = \{ A_{i_k}^{[k]} \in \mathcal{M}_D (\mathbb{C}), i_k = 1, 2 , \,  1 \leq k \leq N \}$ are called \emph{Kraus operators} or \emph{MPS matrices} and $D$ is the \emph{bond dimension} of the MPS \cite{PhysRevLett.91.147902}.

If the state is translational invariant, then there always exists a site--independent MPS representation such that  $A_{i_k}^{[k]} = A_i$ $\forall k$, normally by increasing the bond dimension of the MPS \cite{PGVWC07}[Theorem 3]. Moreover, two sets of Kraus operators $\{A_i \}$ and $\{ B_i \}$ are equivalent \cite{PGVWC07}[Theorem 2], i.e. they represent the same state, if there exists an invertible matrix $X$ such that $B_i = X A_i X^{-1}$, $\forall i$. 

It is clear that MPS representations implemented by diagonal matrices lead to symmetric states. That the opposite also holds is suggested in \cite{PGVWC07}[Apendix A.1]. We state it as a Theorem, whose proof is immediate given the previous constructions.

\begin{thm} \label{thm:optimalbond}
Any $N$-qubit permutation invariant state $|\Psi_{S}^{(N)}\rangle = \sum_{k=1}^D c_k |x_k \rangle^{\otimes N}$ has a diagonal MPS representation, with bond dimension $D \leq N+1$, given by the Kraus operators 
\begin{equation}\label{eq:diagKraus}
A_{\mu}  =  \sum_{k=1}^{D}  (c_k)^{\frac{1}{N}} \langle \mu | x_k \rangle \, | k \rangle \langle k |\,,
\end{equation}
where $D$ is the optimal bond dimension if and only if the decomposition $|\Psi_{S}^{(N)}\rangle = \sum_{k=1}^D c_k |x_k \rangle^{\otimes N}$ uses also the minimal number of vectors.
\end{thm}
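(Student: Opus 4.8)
The plan is to split the statement into three independent pieces: (i) the matrices in~(\ref{eq:diagKraus}) do reproduce the given state, (ii) the bound $D\le N+1$, and (iii) the equivalence ``optimal bond dimension $\Leftrightarrow$ minimal number of vectors''. For (i) I would just contract the MPS: since every $A_\mu$ in~(\ref{eq:diagKraus}) is diagonal, $A_{\mu_1}\cdots A_{\mu_N}$ is diagonal with $k$-th entry $\prod_{j=1}^{N}(c_k)^{1/N}\langle\mu_j|x_k\rangle = c_k\prod_{j=1}^{N}\langle\mu_j|x_k\rangle$, where the choice of branch of $(c_k)^{1/N}$ is irrelevant because the product of $N$ equal factors returns $c_k$. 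Hence $\mathrm{Tr}[A_{\mu_1}\cdots A_{\mu_N}]=\sum_{k=1}^{D}c_k\prod_{j}\langle\mu_j|x_k\rangle$, and substituting this into~(\ref{eq:MPS}) and resumming with $|x_k\rangle=\sum_\mu\langle\mu|x_k\rangle|\mu\rangle$ gives exactly $\sum_{k=1}^{D}c_k|x_k\rangle^{\otimes N}$. The same computation run in reverse provides the tool for (iii): any diagonal MPS with $D''\times D''$ matrices $A_\mu=\mathrm{diag}(a_\mu^{(1)},\ldots,a_\mu^{(D'')})$ represents $\sum_{k=1}^{D''}|y_k\rangle^{\otimes N}$ with $|y_k\rangle=\sum_\mu a_\mu^{(k)}|\mu\rangle$, so a diagonal MPS of bond dimension $D''$ and a decomposition of the form~(\ref{eq:diagonalstate}) of length $\le D''$ always come in pairs.

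For (ii) I would simply invoke Corollary~\ref{cordiag} (equivalently Lemma~\ref{le:diagMPS}): every $N$-qubit symmetric state admits a decomposition $\sum_{k=1}^{D}c_k|x_k\rangle^{\otimes N}$ with $D\le N+1$ and pairwise linearly independent $|x_k\rangle$; feeding such a decomposition into (i) yields a diagonal MPS of bond dimension $D\le N+1$.

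For (iii), both implications follow from the read-off correspondence. If the given decomposition is not of minimal length, pass to a shorter one with $D'<D$ vectors and apply (i): this produces a diagonal MPS of bond dimension $D'<D$, so $D$ is not optimal. Conversely, assume the decomposition has minimal length $D$; then necessarily all $c_k\neq0$ and the $|x_k\rangle$ are pairwise linearly independent, since a vanishing coefficient can be dropped and two collinear terms, $|x_l\rangle=\lambda|x_k\rangle$, merge into $(c_k+\lambda^N c_l)|x_k\rangle^{\otimes N}$, strictly shortening the decomposition. Now take \emph{any} diagonal MPS of the state, of some bond dimension $D''$; by the read-off it yields a length-$\le D''$ decomposition of the state of the form~(\ref{eq:diagonalstate}), and minimality forces $D\le D''$. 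Therefore $D$ is the minimal bond dimension over all diagonal MPS representations, i.e. the optimal one, which closes the equivalence.

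The genuine content here is light — the paper itself calls the proof ``immediate'' — so the only thing that really needs care is the combinatorial bookkeeping in (iii): verifying that ``minimal number of vectors'' is the correct invariant (hence that a minimal decomposition automatically has nonzero coefficients and pairwise linearly independent vectors), and that the converse direction is compared against \emph{all} diagonal MPS representations and not merely those of the special form~(\ref{eq:diagKraus}). I expect this, rather than any analytic difficulty, to be the main point to get right.
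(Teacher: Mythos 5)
Your proposal is correct and follows essentially the same route as the paper: the paper's proof is exactly your step (i) --- substituting the diagonal Kraus operators of Eq.~(\ref{eq:diagKraus}) into Eq.~(\ref{eq:MPS}) and resumming --- together with the remark that the optimality equivalence is ``trivial by construction,'' which is precisely the read-off correspondence you spell out in (iii). Your explicit treatment of (iii), including the observation that a minimal decomposition forces nonzero coefficients and pairwise linearly independent vectors and that the comparison must range over all diagonal MPS representations, is a careful elaboration of what the paper leaves implicit, not a different argument.
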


\begin{proof}
Let us replace the Kraus operators given by  (\ref{eq:diagKraus}) in (\ref{eq:MPS}):
\begin{eqnarray*}
|\Psi_{S}^{(N)}\rangle & = &\sum_{\mu_1,\ldots, \mu_N =1}^2 \mathrm{Tr}\, [A_{\mu_1} \cdots A_{\mu_N}] |\mu_1 \ldots \mu_N \rangle \\
& = & \lefteqn{\sum\limits_{\substack{\mu_1,\ldots, \mu_N \\  k_1, \ldots , k_N}} | \mu_1 \ldots \mu_N  \rangle \langle \mu_1 \ldots \mu_N | x_{k_1} \ldots x_{k_N}\rangle \times} \\
& & \times c_{k_1}^{\frac{1}{N}} \cdots c_{k_N}^{\frac{1}{N}} \delta_{k_1 k_2} \delta_{k_2 k_3} \ldots \delta_{k_N k_1} \\
& = & \sum_{k=1}^{D} c_k  |x_k\rangle^{\otimes N}.
\end{eqnarray*}
The relation between the optimal bond dimension and the minimal number of vectors in the representation is trivial by construction. This concludes the proof.
\end{proof}

After this existence check, it now behoves us to ask for the optimal bond dimension for a given state is, and after its  properties.

\section{Entanglement and optimal bond dimension}

In this section, we justify the definition of entanglement family  by proposing and discussing the basic properties that we require. Moreover, we compare this definition with those in the literature, and we prove that our classification is stable with respect to changes in the number of parties. 

\subsection{Definition of family and comparison with previous works}

W. D\"ur, G. Vidal and J. I. Cirac established in 2000 that the relevant notion for entanglement classes is equivalence under SLOCC~\cite{DVC00}. Moreover, they proved that SLOCC transformations are implemented by ILOs. Indeed, $|\psi\rangle$ and $|\phi\rangle$ are connected via an SLOCC if and only if a set of local invertible operators $\{A_i\}_{i=1}^N$ exists, such that
\begin{equation}\label{SLOCC}
|\phi\rangle= \left(\bigotimes_{i=1}^N A_i\right)|\psi\rangle\,.
\end{equation}
If we are restricted to the symmetric subspace,  the existence of symmetric ILOs connecting two states  $|\psi\rangle$ and $|\phi\rangle$   is enough to ascertain their SLOCC equivalence; i.e. they are SLOCC equivalent if there exists a local invertible operator $A$ such that $|\phi\rangle=A^{\otimes N}|\psi\rangle$~\cite{PhysRevA.81.052315}.  

One should notice that Local Unitary (LU) equivalence is included in SLOCC equivalence. Thus, any LU entanglement classification is fully included in one and just one SLOCC entanglement class. In this sense, and according to the definition of ``family" that follows, we can say that SLOCC entanglement classes are actually \emph{families} with respect to LU classes.
\begin{defn}[Entanglement family] An entanglement family in a set of states is a collection of entanglement classes for the same set with the following properties: \begin{itemize}
\item
 An entanglement class fully belongs to one and only one family (SLOCC structure preservation)
\item
The class of separable states must be in a family of its own.
 \item
 The number of families grows in a controllable manner with the number of particles.
 \item
 There is a mathematical criterion for the arrangement of classes in families, with a physical interpretation.
\end{itemize}
Optionally, one might add an additional desideratum:
\begin{itemize}
\item
Independence of the criterion with respect to the number of parties \( N \).
\end{itemize}
\end{defn}
As stated, the SLOCC classification can then be understood as an arrangement in entanglement families of the LU entanglement classes. Even so, in what follows we will refer to SLOCC classes, because they are more relevant in what regards to quantum information tasks. In the following, we propose our sectioning into families and compare it to other previous classifications. 

\subsubsection{ DMPS classification}
We propose that the optimal \( D \) diagonal bond dimension be the criterion for classification.  It is well known, from the form  (\ref{eq:diagonalstate}), that it is an SLOCC invariant, and it has been proposed as a measure of entanglement \cite{EB01}. Here, we examine its value as a classification tool.

As \( D \) is an SLOCC invariant, an entanglement SLOCC class belongs to one family, and just one.  All separable states have \( D=1 \), and if a state presents \( D=1 \) then it is separable, by definition. Thus the separable class lies in a family of its own. As we see, the first two criteria we have proposed for families are met. In the next sections, we will discuss the physical interpretation of our classification, and its dependence on $N$. 

\subsubsection{Comparison to the degeneracy classification}
The organisation in families proposed in T. Bastin {\it et al}~\cite{BKMGLS09} relies on the Majorana representation. We assert that this classification in families meets our requirements. For all \( N \) the family \( \mathcal{D}_N \) includes the separable class and only the separable class. The  degeneracy configuration which is the classification criterion is an SLOCC invariant. The third property is elucidated in L. Lamata {\it et al}~\cite{PhysRevA.87.032325}. Instead, the fourth tentative desideratum is not so clearly fulfilled. 

Notice that this classification is different from ours. In fact, already for \( N=4 \)  their \( \mathcal{D}_{1,1,1,1} \) family includes classes that would go to our \( D=2 \) and \( D=3 \) families.

 \subsubsection{Comparison to SLOCC normal form classification}
In F. Verstraete {\it et al} \cite{PhysRevA.68.012103}, they give a procedure for constructing normal forms under SLOCC operations. It is a classification of SLOCC classes into families, since all the elements of a SLOCC class will go into the same normal form. It does not, however, meet all our desiderata: the states in the \( W_N \) class have normal form equal to zero, which is also the normal form for separable states. Thus, the class of separable states is not in a family of its own.

\subsubsection{Comparison to entanglement polytopes classication}
In Ref. \cite{WDGC13}, Walter et al provide an organization of SLOCC classes in families according to polytopes which classify the one particle reduced density matrix. For the specific case of symmetric states of \( N \) qubits, which they compute, they obtain \( \lceil{N/2}\rceil+1 \) families; i.e. growing linearly with the number of qubits, thus fullfilling our requirement of controlled growth of the number of families. Separable states however belong to all entanglement polytopes (although this might be taken as quibbling, since they are the only states for which the relevant parameter is 1). An operational characterization is given by the maximal linear entropy of entanglement achievable in a family.
 
\subsection{Nesting and stability}
In this subsection, we prove a limited Theorem on nesting, in order to clarify how our classification is stable with respect  to changes of $N$. Indeed, we define the stability property as follows: the family is \emph{stable} if, when moving up in number of parties, one maintains the minimal bond dimension and thus the local properties. To clarify: since we consider the family for \( N \) qubits defined in terms of the optimal bond dimension of its MPS representations, we move up in number of parties by starting with an \( N \)-qubit state in the family, fixing its matrices \( A_0 \) and \( A_1 \), and now using the selfsame matrices for \( N+1 \) qubits. This can be done for all states in the family. We prove now that stability is a \emph{generic} property, in  that it holds  for  \( D\leq\lfloor{N/2}\rfloor+1 \), given \( N \). We shall explain later (Subsection \ref{sec:cattle}) in what sense this  is indeed generic. Since we are dealing with minimal bond dimensions for different states (in different Hilbert spaces, even), we explicitly denote the optimal bond dimension for state \( |\psi\rangle \) by \( D(\psi) \) in this context.
\begin{thm}[Nesting]
Let us consider an $N$-particle symmetric state of qubits $|\psi_N\rangle=\sum_{k=1}^{D(\psi_N)} |x_k\rangle^{\otimes N}$ with optimal bond dimension $D(\psi_N)$, such that $D(\psi_N)\leq \lfloor N/2\rfloor +1$. Then, the state $|\psi_{N+1}\rangle=\sum_{k=1}^{D(\psi_N)}|x_k\rangle^{\otimes (N+1)}$ has optimal bond dimension $D(\psi_{N+1})=D(\psi_N)$.
\end{thm}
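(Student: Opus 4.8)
The statement splits into an easy inequality and the real work. \emph{Easy direction:} writing $D:=D(\psi_N)$, we have $D(\psi_{N+1})\le D$ because $|\psi_{N+1}\rangle=\sum_{k=1}^{D}|x_k\rangle^{\otimes(N+1)}$ is by construction a diagonal decomposition with $D$ vectors; it is moreover a nonzero state since, by Corollary~\ref{cordiag} (with $M=N+1$, so $M+1=N+2\ge D$), the vectors $\{|x_k\rangle^{\otimes(N+1)}\}_{k=1}^{D}$ are linearly independent. Here I use the standing fact that optimality of the $N$-party decomposition forces the $|x_k\rangle$ to be \emph{pairwise} linearly independent: two proportional ones could be merged into a single term, contradicting minimality of $D$. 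So everything comes down to proving $D(\psi_{N+1})\ge D$.

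\emph{Lower bound via a catalecticant.} The plan is to attach to $|\psi_{N+1}\rangle$ a number that (i) lower-bounds the number of vectors in \emph{any} diagonal decomposition, and (ii) evaluates to $D$ on our state. Using the deconcatenation inclusion $\mathrm{Sym}(\mathcal H^{\otimes(N+1)})\hookrightarrow \mathrm{Sym}(\mathcal H^{\otimes(D-1)})\otimes\mathrm{Sym}(\mathcal H^{\otimes(N+2-D)})$, read $|\psi_{N+1}\rangle$ as a linear operator $T:\mathrm{Sym}(\mathcal H^{\otimes(D-1)})^{*}\to\mathrm{Sym}(\mathcal H^{\otimes(N+2-D)})$. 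For (i): any decomposition $|\psi_{N+1}\rangle=\sum_{j=1}^{r}c_j|y_j\rangle^{\otimes(N+1)}$ sends the $j$-th term to the rank-one operator $|y_j\rangle^{\otimes(N+2-D)}\langle\phi_j|$, where $\langle\phi_j|$ is the functional pairing against $|y_j\rangle^{\otimes(D-1)}$, so $\mathrm{rank}\,T\le r$; taking an optimal decomposition gives $\mathrm{rank}\,T\le D(\psi_{N+1})$. For (ii): inserting our given decomposition, $T=\sum_{k=1}^{D}|x_k\rangle^{\otimes(N+2-D)}\langle\phi_k|$, and I claim its rank is exactly $D$. This follows from two linear-independence facts, both instances of Lemma~\ref{le:diagMPS}/Corollary~\ref{cordiag} for the pairwise linearly independent family $\{|x_k\rangle\}_{k=1}^{D}$: first, $\{|x_k\rangle^{\otimes(D-1)}\}_{k=1}^{D}$ is linearly independent — $M=D-1$ satisfies $M+1=D\ge D$ — hence spans the $D$-dimensional space $\mathrm{Sym}(\mathcal H^{\otimes(D-1)})$, so the functionals $\langle\phi_k|$ are linearly independent; second, $\{|x_k\rangle^{\otimes(N+2-D)}\}_{k=1}^{D}$ is linearly independent, which needs $M=N+2-D$ to satisfy $M+1\ge D$, i.e. $2D\le N+3$. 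Together (i) and (ii) give $D(\psi_{N+1})\ge D$, and with the easy direction, $D(\psi_{N+1})=D$.

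\emph{Where the hypothesis enters, and the hard part.} The argument works only if the $N+1$ tensor slots can be split as $(D-1)+(N+2-D)$ with \emph{both} blocks big enough to host $D$ linearly independent powered vectors; the left block is always fine, and the right block demands $2D\le N+3$, which the hypothesis $D\le\lfloor N/2\rfloor+1$ (i.e. $2D\le N+2$) supplies with room to spare. The clean bound $\lfloor N/2\rfloor+1$ is in fact exactly the threshold one needs if one wants the \emph{same} splitting index $D-1$ to also certify $\mathrm{rank}=D$ for the original $N$-party state (degree $N$, so the right block has size $N+1-D$ and one needs $2D\le N+2$); one may equally well package the theorem around that symmetric statement. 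The step requiring genuine care — the main obstacle — is item (ii): showing the catalecticant rank is $D$ rather than merely $\le D$. That is where pairwise linear independence of the Majorana points and the dimension $D$ of $\mathrm{Sym}(\mathcal H^{\otimes(D-1)})$ must be used precisely, and where the balanced-splitting inequality is consumed; the remainder is bookkeeping.
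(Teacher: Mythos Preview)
Your proof is correct and follows essentially the same route as the paper's: the catalecticant rank you attach to $|\psi_{N+1}\rangle$ is precisely the paper's Schmidt binary rank $\mathrm{sbr}_M(\psi_{N+1})$ for the bipartition $M=D-1$, and your steps (i) and (ii) are exactly Lemma~\ref{sbr1} and Lemma~\ref{sbrlin} respectively. The only cosmetic difference is that you fix the split at $M=D-1$ from the outset, whereas the paper argues abstractly that some admissible $M$ exists; both versions consume the hypothesis $D\le\lfloor N/2\rfloor+1$ in the same way, via Corollary~\ref{cordiag}.
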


Let us introduce the {\it Schmidt binary rank} at level \( M \) for the studied systems, which we will denote, for an $N$-particle symmetric state \( |\psi_N\rangle \), as \( \mathrm{sbr}_M\left(\psi_N\right) \). It is defined as the Schmidt number for the bipartition of the system in \( M \) and \( N-M \) qubits, i.e. as the minimal number \( s \) such that \( |\psi_N\rangle \) can be written in the form
\begin{equation}\label{sbr}
|\psi_N\rangle= \sum_{k=1}^s |\xi_k\rangle_M |\chi_k\rangle_{N-M}\,,
\end{equation}
with \( \{|\xi_k\rangle_M\}_{k=1}^s \)  orthonormal vectors in the space of \( M \) qubits, and similarly \(  \{|\chi_k\rangle_{N-M}\}_{k=1}^s \) in the space of $N-M$ qubits. The following Lemmas are useful to prove the Nesting Theorem.
\begin{lem}\label{sbr1}
$\mathrm{sbr}_M\left(\psi_N\right)\leq D(\psi_N)\,$ for all $1\leq M<N$. 
\end{lem}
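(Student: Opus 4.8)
The plan is to exploit the diagonal structure of the state directly. Write $|\psi_N\rangle = \sum_{k=1}^{D(\psi_N)} |x_k\rangle^{\otimes N}$ with $D = D(\psi_N)$ the optimal bond dimension, and simply split each tensor factor into the first $M$ qubits and the remaining $N-M$ qubits: $|x_k\rangle^{\otimes N} = |x_k\rangle^{\otimes M} \otimes |x_k\rangle^{\otimes (N-M)}$. This gives $|\psi_N\rangle = \sum_{k=1}^{D} |x_k\rangle^{\otimes M} \otimes |x_k\rangle^{\otimes(N-M)}$, an expansion across the $M$ versus $N-M$ cut with exactly $D$ terms. The point is that the Schmidt rank across a bipartition is the \emph{minimal} number of product terms needed, and any such expansion — even one whose factors are neither orthonormal nor linearly independent — furnishes an upper bound on that minimum once one passes to a linearly independent subset on each side. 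Hence $\mathrm{sbr}_M(\psi_N) \leq D(\psi_N)$.

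The one point requiring a word of care is that the vectors $\{|x_k\rangle^{\otimes M}\}_{k}$ and $\{|x_k\rangle^{\otimes(N-M)}\}_k$ appearing above need not be orthonormal, whereas the definition in Eq.~\eqref{sbr} demands orthonormal families on each side. This is handled by the standard argument: given any decomposition $|\psi_N\rangle = \sum_{k=1}^{D} |a_k\rangle_M |b_k\rangle_{N-M}$, the Schmidt rank equals the rank of the coefficient matrix of $|\psi_N\rangle$ in fixed product bases of the two factors, and that matrix is a sum of $D$ rank-one matrices, hence has rank at most $D$; performing the singular value decomposition then produces the orthonormal Schmidt vectors, at most $D$ of them. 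So the bound transfers immediately.

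I do not expect a genuine obstacle here: the lemma is essentially the observation that a diagonal (generalized-GHZ-type) decomposition with $D$ terms is in particular a product decomposition with $D$ terms across \emph{every} bipartition, and Schmidt rank is subadditive over such decompositions. The only modest subtlety is making sure the argument uses the \emph{optimal} $D$ and that the inequality is stated uniformly in $M$ for $1 \leq M < N$, both of which are automatic since the single diagonal decomposition of $|\psi_N\rangle$ works simultaneously for all cuts. This Lemma will then feed into the Nesting Theorem by controlling how the bond dimension can (or cannot) drop when one party is added, together with a companion lower bound relating $\mathrm{sbr}_M$ back to $D$ under the hypothesis $D \leq \lfloor N/2\rfloor + 1$.
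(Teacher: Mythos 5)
Your proposal is correct and follows essentially the same route as the paper: split the diagonal decomposition $\sum_{k=1}^{D}|x_k\rangle^{\otimes N}$ across the $M$ versus $N-M$ cut and read off the bound. Your extra step justifying the passage from a non-orthonormal $D$-term product expansion to the Schmidt rank (via the rank of the coefficient matrix) is a welcome bit of care that the paper's one-line proof leaves implicit.
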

\begin{proof}
We can write $|\psi_N\rangle=\sum_{k=1}^{D(\psi_N)}|x_k\rangle^{\otimes N}$. By defining \( | \xi_k\rangle_M=|x_k\rangle^{\otimes M} \) and \( | \chi_k\rangle_{N-M}=|x_k\rangle^{\otimes (N-M)} \), and comparing with Eq.~\eqref{sbr}, it directly follows that, for all \(1\leq M<N \) and all symmetric \( N \)-particle states, $\mathrm{sbr}_M\left(\psi_N\right)\leq D(\psi_N)\,$.
\end{proof}

\begin{lem}\label{sbrlin}
If $|\psi_N\rangle= \sum_{k=1}^D |\xi_k\rangle_{M} |\chi_k\rangle_{N-M}$, where $\{|\xi_k\rangle_M\}_{k=1}^D$ and $\{ |\chi_k\rangle_{N-M}\}_{k=1}^D$ are two sets of linearly independent vectors, then $\mathrm{sbr}_M(\psi_N)=D$.
\end{lem}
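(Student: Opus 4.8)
The plan is to prove the two inequalities $\mathrm{sbr}_M(\psi_N)\le D$ and $\mathrm{sbr}_M(\psi_N)\ge D$ separately. The first is immediate: the decomposition $|\psi_N\rangle=\sum_{k=1}^D|\xi_k\rangle_M|\chi_k\rangle_{N-M}$ can be turned into a genuine Schmidt-type sum of at most $D$ terms (for instance by Gram--Schmidt orthonormalising one of the two sides and absorbing the change of basis into the other), so $\mathrm{sbr}_M(\psi_N)\le D$ holds by definition. The content of the Lemma is therefore the matching lower bound, which comes from the linear independence hypothesis.

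For the lower bound I would reduce everything to a rank computation. Since $\{|\xi_k\rangle_M\}_{k=1}^D$ are linearly independent they span a $D$-dimensional subspace of the $M$-qubit space; fix an orthonormal basis $\{|e_a\rangle\}_{a=1}^D$ of that subspace and write $|\xi_k\rangle_M=\sum_{a=1}^D U_{ak}|e_a\rangle$ with $U\in\mathcal{M}_D(\mathbb{C})$. Linear independence of the $|\xi_k\rangle$ is precisely invertibility of $U$. Doing the same on the other side gives $|\chi_k\rangle_{N-M}=\sum_{b=1}^D V_{bk}|f_b\rangle$ with $V$ invertible and $\{|f_b\rangle\}_{b=1}^D$ orthonormal. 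Substituting,
\[
|\psi_N\rangle=\sum_{a,b}\Big(\sum_{k=1}^D U_{ak}V_{bk}\Big)|e_a\rangle|f_b\rangle=\sum_{a,b}(UV^{\top})_{ab}\,|e_a\rangle|f_b\rangle\,.
\]
Completing $\{|e_a\rangle\}$ and $\{|f_b\rangle\}$ to orthonormal bases of the full $M$- and $(N-M)$-qubit Hilbert spaces merely pads the coefficient array with zero rows and columns, leaving its rank unchanged. The Schmidt number of a bipartite vector equals the rank of its coefficient matrix in any product orthonormal basis (this is just the singular value decomposition), so $\mathrm{sbr}_M(\psi_N)=\mathrm{rank}(UV^{\top})$. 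Since $U$ and $V$ are invertible, so is $V^{\top}$, hence $UV^{\top}$ is invertible and has rank $D$, which yields $\mathrm{sbr}_M(\psi_N)=D$.

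The only point requiring care is the transpose-versus-adjoint bookkeeping: the expansion coefficients $U_{ak},V_{bk}$ enter \emph{linearly}, with no complex conjugation, so the coefficient matrix is $UV^{\top}$ rather than $UV^{\dagger}$; this is irrelevant for the argument because transposition preserves both invertibility and rank. No genuine obstacle is expected here, as the statement is essentially a repackaging of the identification of the Schmidt rank with the matrix rank of the coefficient array, with linear independence upgrading the trivial upper bound $\mathrm{sbr}_M(\psi_N)\le D$ to an equality.
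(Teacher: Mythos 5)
Your proof is correct and follows essentially the same route as the paper's: the paper applies invertible local operators $X_A\otimes X_B$ to orthogonalise the two linearly independent sets and then invokes invariance of the Schmidt binary rank under such operators, which is precisely your observation that the coefficient matrix $UV^{\top}$ is a product of invertible matrices and hence has rank $D$. Your version is marginally more self-contained, since it makes the rank computation explicit instead of citing the ILO-invariance of the Schmidt rank as a known fact.
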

\begin{proof}
There exist two invertible operators $X_{A,B}$, such that the sets $\{|\tilde \xi_k\rangle_M\}_{k=1}^D$ and $\{ |\tilde \chi_k\rangle_{N-M}\}_{k=1}^D$, where $|\tilde \xi_k\rangle_{M}= X_A|\xi_k\rangle_M$ and $|\tilde \chi_k\rangle_{M}= X_{B}|\chi_k\rangle_{N-M}$, are orthogonal. This implies that  $\mathrm{sbr}_M(\tilde \psi_N)=D$ for the state $|\tilde \psi_N\rangle=X_A\otimes X_B |\psi_N\rangle$. As the Schmidt binary rank is invariant under local invertible operations, we have that $\mathrm{sbr}_M(\psi_N)=\mathrm{sbr}_M(\tilde \psi_N)=D$.
\end{proof}

\begin{proof}[Proof of Theorem 2]
We have that $D(\psi_{N+1})\leq D(\psi_N)$ by construction. On the other hand, if \( D(\psi_N)\leq \lfloor N/2\rfloor+1 \), then   an integer  $M$ exists such that $D(\psi_{N})\leq M+1$ and $D(\psi_N)\leq N+1-M$. As $\{|x_k\rangle\}_{k=1}^{D(\psi_N)}$ are necessarily pairwise linearly independent, Corollary~\ref{cordiag} implies that, in the bipartition \( |\psi_{N+1}\rangle=\sum_{k=1}^{D(\psi_N)}\left(|x_k\rangle^{\otimes M}\right)\otimes \left(|x_k\rangle^{\otimes (N+1-M)}\right) \), both sets \( \left\{|x_k\rangle^{\otimes M}\right\}_{k=1}^{D(\psi_N)} \) and  \( \left\{|x_k\rangle^{\otimes (N+1-M)}\right\}_{k=1}^{D(\psi_N)} \) are linearly independent.Then, Lemma~\ref{sbrlin} ensures that \( \mathrm{sbr}_M\left(\psi_{N+1}\right)=D(\psi_N) \).   Lemma~\ref{sbr1}  now implies that $D(\psi_{N})\leq D(\psi_{N+1})$, which concludes the proof.
\end{proof}

\subsection{Examples of DMPS representation}
In this subsection, we provide  examples of  optimal matrix product representations. This comes with a general method to find the optimal bond dimension for a given state.

\subsubsection{Statement of the problem}
Let us consider an \( N \)-particle symmetric state expressed in the Dicke basis
\begin{equation}
|\psi\rangle=\sum_{\alpha=0}^{N} d_\alpha |D_\alpha^{(N)}\rangle\,,
\end{equation}
as our initial data. In order to identify a Matrix Product representation, we shall use the equivalence of Theorem \ref{thm:optimalbond}. Thus, we look for \( D \) one-qubit states of the form $|x_k\rangle=x_k|0\rangle + y_k |1\rangle$,
such that we can write $|\psi\rangle= \sum_{k=1}^D |x_k\rangle^{\otimes N}$. This demands
\begin{equation}\label{eq:conditionsD}
\sum_{k=1}^D x_k^{N-\alpha} y_k^\alpha ={N\choose \alpha}^{-1/2}\, d_\alpha,
\end{equation} 
for all \(\alpha\) from \( 0 \) to \( N \).  The task at hand is to examine, given the state, which is the minimum \( D \) for which the set of equations~\eqref{eq:conditionsD} has a solution, and, once having identified such minimum \( D \), to obtain such solution.

It is important to notice that, even if we were to start by positing two \( D\times D \) diagonal matrices \( A_0 \) and \( A_1 \), which do determine a symmetric \( N \) qubit state, we would not have determined whether that bond dimension is indeed optimal for the state. By construction, the optimal value would be smaller than or equal to the bond dimension proposed, and it is still pertinent to examine the set of solutions of Eq.~\eqref{eq:conditionsD}. 

In the following examples, we give optimal \( D \) constructions, that are clearly not unique for a given state.

\subsubsection{ \( W_N \) state}
The $W_N$ state is defined as
\begin{eqnarray*}
|W_N\rangle = & \frac{1}{ \sqrt{N}} (|100\ldots000\rangle+|010\ldots000\rangle+\cdots +  \\ 
+ & |000\ldots010\rangle+|000\ldots001\rangle)\,.
\end{eqnarray*}
with \( N\geq3 \). Equivalently, this is the Dicke state with one excitation, namely \( |D_1^{(N)}\rangle \). The coefficients \( d_\alpha \) to be inserted in Eq.~(\ref{eq:conditionsD}) are thus \( d_\alpha= \delta_{ \alpha1} \), with \(\delta_{ \alpha \beta}\) the 
Kronecker delta. 

Let us write a solution in terms of matrices:
\begin{widetext}
\begin{eqnarray}
A_0&= \lambda\, \mathrm{diag}\left\{1,e^{2\pi i/N(N-1)},\ldots,e^{2\pi i k/N(N-1)},\ldots e^{2\pi i(N-2)/N(N-1)},0\right\}\\
A_1&=\mu\, \mathrm{diag}\left\{1,e^{2\pi i(N-1)/N},\ldots,e^{2\pi i(N-k)/N},\ldots,e^{4\pi i/N},\left(1-N\right)^{1/N}\right\}.
\end{eqnarray}
\end{widetext}
This is adequate since, for \( 0\leq l<N \), we have
\begin{eqnarray}
\frac{\mathrm{Tr}\left[A_0^{N-l}A_1^l\right]}{ \lambda^{N-l} \mu^l} &=&\sum_{k=0}^{N-2} e^{2\pi i k(N-l)/N(N-1)} e^{2\pi i l(N-k)/N}\nonumber \\
&=&e^{2\pi i l}\sum_{k=0}^{N-2}\exp\left[2\pi i k(1-l)/(N-1)\right]\nonumber \\
&=& \frac{1-\exp\left[2\pi i (1-l)\right]}{1-\exp\left[2\pi i (1-l)/(N-1)\right]} \nonumber \\
&=&\delta_{l1} (N-1)\,,
\end{eqnarray}
and \( \mathrm{Tr}\left[A_1^N\right]=0 \). Therefore, we can choose any $\lambda$, $\mu$ respecting the normalisation relation \( \lambda^N ( \mu/ \lambda) (N-1)= 1/ \sqrt{N} \). Additionally, we can also fix the gauge, by means of \( \sum_i A_i^{\dag} A_i\propto 1\). One can now write this solution in terms of \( |x_k\rangle \) states using Eq.~\eqref{eq:diagKraus} with $c_k=1$:
 \begin{eqnarray*}
  |x_k\rangle&=&\mu |1\rangle+ e^{2\pi i (k-1)/(N-1)}\lambda |0\rangle\,, \; 1\leq k\leq N-1\,\\
 |x_N\rangle&=&\mu(1-N)^{1/N}|1\rangle\,
 \end{eqnarray*}
The existence of this form guarantees that  \( D(W_N)\leq N \). We need to prove that indeed \( D(W_N)=N \). In order to do that, consider the set of equations (\ref{eq:conditionsD}), particularised for this case, for a generic \( D \). Assume first that none of the coefficients \( x_\alpha \) is zero. Then, up to normalisation, the equations can be written in the equivalent form
\begin{equation}\label{optD}
\begin{pmatrix}
1&1&\cdots&1\\
z_1& z_2&\cdots&z_D\\
z_1^2&z_2^2&\cdots&z_D^2\\
\vdots&\vdots&\ddots&\vdots\\
z_1^N&z_2^N&\cdots&z_D^N
\end{pmatrix}
\begin{pmatrix}
x_1^N\\ x_2^N\\ \vdots\\ x_D^N
\end{pmatrix}= 
\begin{pmatrix}
0\\1\\0\\ \vdots\\0
\end{pmatrix}\,,
\end{equation}
where \( z_\alpha= y_\alpha/x_\alpha \). Let us define the $(N+1)\times D$ matrix 
\begin{equation}
Z_D=\begin{pmatrix}
1&1&\cdots&1\\
z_1& z_2&\cdots&z_D\\
z_1^2&z_2^2&\cdots&z_D^2\\
\vdots&\vdots&\ddots&\vdots\\
z_1^N&z_2^N&\cdots&z_D^N
\end{pmatrix}\,,
\end{equation}
where $z_\alpha$ are all distinct, and the \( N+1 \) component vector
\begin{equation}
b= \begin{pmatrix}
0\\1\\0\\ \vdots\\0
\end{pmatrix}\,.
\end{equation}
The requirement that the rank of \( Z_D \) be equal to the rank of the augmented matrix \( \left[Z_D\ b\right] \) is a necessary condition for the existence of solutions to the set of linear equations~\eqref{optD} for \( x_\alpha^N \), with $x_\alpha\not=0$ $\forall \alpha$. As \( \mathrm{rank}\left(Z_D\right)=D \) (because all the \( z_\alpha \) must be distinct), it is a necessary condition for existence of solutions of the system (\ref{eq:conditionsD}) that the rank of the augmented matrix  \( \left[Z_D\ b\right] \) also be \( D \).
 
 The case in which one of the coefficients \( x_\alpha=0 \) can be written in a similar manner. Without loss of generality, assume that the null coefficient is the \( D \)-th one. Then, the system of equations can be written as 
 \begin{equation}\label{optD2}
\begin{pmatrix}
1&1&\cdots&1\\
z_1& z_2&\cdots&z_{D-1}\\
z_1^2&z_2^2&\cdots&z_{D-1}^2\\
\vdots&\vdots&\ddots&\vdots\\
z_1^N&z_2^N&\cdots&z_{D-1}^N
\end{pmatrix}
\begin{pmatrix}
x_1^N\\ x_2^N\\ \vdots\\ x_{D-1}^N
\end{pmatrix}= 
b- \begin{pmatrix}
0\\0\\ \vdots\\ 0\\ y_D^N
\end{pmatrix}\,\equiv \tilde b_D.
\end{equation}
Then, a similar condition is retrieved, namely that the rank of \( Z_{D-1} \) and the rank of the augmented matrix are equal, which is a necessary condition for the existence of solutions for the system of equations.
  
Let us start by examining the case of \( D=N-1 \), initially for \( x_\alpha\neq0\ \forall \alpha \). The necessary condition for the existence of solutions is that all the \( N\times N \) minors of
 \begin{equation}
 \begin{pmatrix}
 1&\cdots&1&0\\
 z_1&\cdots&z_{N-1}&1\\
 z_1^2&\cdots& z_{N-1}^2&0\\
 \vdots&\ddots&\vdots&\vdots\\
 z_1^N&\cdots& z_{N-1}^N&0
 \end{pmatrix}
 \end{equation}
 are zero, using the hypothesis that for all \( \alpha\neq \beta \) it is the case that \( z_\alpha\neq z_\beta \). Thus, compute first 
 \begin{equation*}
 \begin{vmatrix}
 z_1&\cdots&z_{N-1}&1\\
 z_1^2&\cdots& z_{N-1}^2&0\\
 \vdots&\ddots&\vdots&\vdots\\
 z_1^N&\cdots& z_{N-1}^N&0
 \end{vmatrix}=(-1)^{N+1}\left(\prod_{\alpha=1}^{N-1} z_\alpha^2\right) \begin{vmatrix}1&\cdots&1\\ z_1&\cdots& z_{N-1}\\ \vdots&\ddots&\vdots\\ z_1^{N-2}&\cdots & z_{N-1}^{N-2}\end{vmatrix}
 \end{equation*}
 Since the last factor is a Vandermonde determinant, in order for this minor to be zero, it is necessary that one of the \( z_\alpha \) is zero. Without loss of generality, let that be \( z_{N-1}=0 \). We then compute
 \begin{eqnarray*}
 \begin{vmatrix}1&\cdots&1&1&0\\
 z_1&\cdots&z_{N-2}&0&1\\
 z_1^2&\cdots& z_{N-2}^2&0&0\\
 \vdots&\ddots&\vdots&\vdots&\vdots\\
 z_1^{N-1}& \cdots& z_{N-2}^{N-1}&0&0\end{vmatrix}
 &=&
  \begin{vmatrix}
 z_1^2&\cdots& z_{N-2}^2 \\
 \vdots&\ddots&\vdots \\
 z_1^{N-1}& \cdots& z_{N-2}^{N-1}\end{vmatrix}  \\ 
 &=&\left(\prod_{\alpha=1}^{N-2}z^2_\alpha\right)  \begin{vmatrix}1&\cdots&1\\ z_1&\cdots& z_{N-2}\\ \vdots&\ddots&\vdots\\ z_1^{N-3}&\cdots & z_{N-2}^{N-3}\end{vmatrix}\,.
 \end{eqnarray*} 
 In order for this minor to be zero as well, we would need another \( z_\alpha \) to be zero. That is however excluded,  since we require from the beginning that all the \( z_\alpha \) to be different. It follows that the two minors computed so far cannot be simultaneously zero, while keeping pairwise linear independence of the \( \{|x_\alpha\rangle\}_{\alpha=1}^{N-1} \). Since all \( N \times N\) minors must be zero for the rank of the augmented matrix to be \( N-1 \), and two of those cannot be simultaneously zero, under the assumption that all \( x_\alpha\neq0 \), it follows that the rank of the augmented matrix is larger than the rank of \( Z_{N-1} \) and there is no acceptable solution of this form.
 
 Let us now relax the assumption that all \( x_\alpha\neq0 \), and allow \( x_{N-1} =0\). Then, a similar analysis with the augmented matrix \( [Z_{N-2}\, \tilde b_{N-1}]\), where $\tilde b$ is defined in Eq.~\eqref{optD2}, shows again the inexistence of solutions.

The arguments above hold for any \( D<N-1 \), as the image of $Z_{D'}$ is included in the image of $Z_D$, if $D'< D$.  We have thus finally proven that the optimal bond dimension for \( W_N \) is indeed \( N \).

 By means of this example, we have provided an algorithm for the determination of the optimal bond dimension for a symmetric state of \( N \) qubits: 1) construct the \( b \) vector from the Dicke basis expansion of the state, and define \( \tilde b_{\tilde D} \equiv b -(0,0,\cdots,y_{\tilde{D}}^N)^T\); 2) start with \( \tilde{D}=1 \); 3) Is it possible for \( \mathrm{rank}[Z_{\tilde{D}}\ b] \) to be \( \tilde{D} \) or for \( \mathrm{rank}[Z_{\tilde{D}-1}\ \tilde b_{\tilde D}] \) to be \( \tilde{D}-1 \)? If yes, then \( D=\tilde{D} \). If no, 4) let \( \tilde{D} \) be set to  \( \tilde{D}+1 \) and return to 3.

\subsubsection{$\text{GHZ}_N$ state}
It is obvious by construction that the $\text{GHZ}_N$ states \( | \mathrm{GHZ}_N\rangle=\left(|00\cdots0\rangle+|11\cdots1\rangle\right)/ \sqrt{2} \) have \( D=2 \) and do not really require the machinery presented above. Nonetheless, it can be instructive to apply it to this case. We have to distinguish between two cases: a) \( N=2 \) and b) \( N>2 \). Let us first tackle \( N=2 \). In this case we have to study whether it is possible for 
\begin{equation}
\begin{pmatrix}
1&1&1\\ z_1& z_2&0\\ z_1^2& z_2^2&1
\end{pmatrix}
\end{equation}
to be of rank 2. This would require that the determinant is zero, i.e. \( \left(z_2-z_1\right)\left(1+z_1 z_2\right)=0 \), which, under the condition \( z_1\neq z_2 \) entails \( z_2=-1/z_1 \). Inserting this into the system of equations, we have
\begin{eqnarray}
x_1^2+ x_2^2&=& 1\,,\\
z_1 x_1^2 - \frac{x_2^2}{z_1}&=&0\,,\\
z_1^2 x_1^2+ \frac{x_2^2}{z_2^2}&=&1\,.
\end{eqnarray}
The second equation provides us with \( x_2^2= z_1^2 x_1^2 \). Substituting in the others we are led to \( x_1^2=1/2 \) and \( z_1^2=1 \), which in turn give us \( x_2^2=1/2 \). This is a valid solution, as $x_{1,2}\not=0$. We recover thus the following \( D=2 \) decomposition of the $\text{GHZ}_2$ state:
\begin{equation}
| \mathrm{GHZ}_2\rangle= \frac{1}{2 \sqrt{2}}\left(|0\rangle+|1\rangle\right)^{\otimes 2}+ \frac{1}{2\sqrt{2}}\left(|0\rangle-|1\rangle\right)^{\otimes 2}\,.
\end{equation}
Passing now to the general case, \( N\geq3 \), we first attempt \( D=2 \) with the hypothesis \( x_1,x_2\neq0 \). This entails examining the rank of 
\begin{equation}
\begin{pmatrix}
1&1&1\\ z_1& z_2&0\\ z_1^2& z_2^2&0\\
\vdots&\vdots&\vdots\\ z_1^N& z_N^3&1
\end{pmatrix}\,.
\end{equation}
We should study all \( 3\times 3 \) minors, in principle. It is however enough to look at just two: first
\begin{equation}
\begin{vmatrix}
1&1&1\\ z_1& z_2&0\\ z_1^2& z_2^2& 0
\end{vmatrix}=z_1 z_2(z_2-z_1)\,,
\end{equation}
which, without loss of generality, leads us to require \( z_2=0 \), since we require it to be zero, and that \( z_1\neq z_2 \). Under this \( z_2=0 \) condition we next examine
\begin{equation}
\begin{vmatrix}
1&1&1\\ z_1& 0& 0\\
z_1^N& 0&1
\end{vmatrix}=-z_1\,.
\end{equation}
For this to be zero as well we would have to require \( z_1=0 \), which we cannot allow by our requirement that \( z_1\neq z_2 \). Thus there is no \( D=2 \) presentation for \( | \mathrm{GHZ}_N\rangle \) with \( N\geq3 \) such that both  \( x_1 \) and \( x_2 \) are different from zero. Let us now assume \( x_2=0 \). We have to study the system
\begin{equation}
\begin{pmatrix}
1\\ z_1\\ \vdots\\ z_1^N
\end{pmatrix} x_1^N= \begin{pmatrix}
1\\0\\ \vdots\\ 1
\end{pmatrix}- \begin{pmatrix}
0\\ 0\\ \vdots\\ y_2^N
\end{pmatrix}\,.
\end{equation}
The solution is given by \( z_1=0 \), \( x_1^N=1 \), \( y_2^N=1 \), which corresponds to the trivial construction
\begin{equation}
| \mathrm{GHZ}_N\rangle= \frac{1}{ \sqrt{2}}\left(|0\rangle^{\otimes N}+ |1\rangle^{\otimes N}\right)\,.
\end{equation}

\subsubsection{ \( X_N \) state}
We now give an example of a family of symmetric \( N \)-particle  states ($N\geq 4$) that is explicitly in a different SLOCC class with respect the previous ones. Let us denote it by \( X_N(z) \). They are given by \( D=N-1 \) one-qubit vectors as follows:
\begin{equation}
|x_k\rangle=|1\rangle+ e^{2\pi i(k-1)/(N-1)}z |0\rangle\,,
\end{equation}
with \( k=1,\ldots, N-1 \) and $z\not=0$. The state is thus, by a non-normalised representative,
\begin{eqnarray}
|X_N(z)\rangle&=& \frac{1}{N-1}\sum_{k=1}^{N-1}|x_k\rangle^{\otimes N} \nonumber \\ &=& |1\rangle^{\otimes N}+ z^{N-1} \sqrt{N}|W_N\rangle\,.
\end{eqnarray}
The case \( N=4 \) (with \( z=2^{-1/6} \)) was introduced in A. Osterloh and J. Siewert~\cite{PhysRevA.72.012337} as a maximal entangled genuine four-partite state, with the notation \( |\Phi_2\rangle \). These authors also introduced the general case \( N \), with \( z= N^{-1/2(N-1)} \)~ \cite{1367-2630-12-7-075025}. 

In the following, we prove that the optimal bond dimension for $X_N(z)$ is $N-1$, by applying the same method  as in the $W_N$ and $\text{GHZ}_N$ cases. The case $D=N-1$ has a solution by construction. Let us examine the case of $D=N-2$, first for $x_\alpha\not=0$ $\forall \alpha$. We need that all the $(N-1)\ \times  (N-1)$ minors of
 \begin{equation}
 \begin{pmatrix}
 1&\cdots&1&0\\
 z_1&\cdots&z_{N-2}& z^{N-1}\\
 z_1^2&\cdots& z_{N-2}^2&0\\
 \vdots&\ddots&\vdots&\vdots\\
 z_1^N&\cdots& z_{N-2}^N&1
 \end{pmatrix}
 \end{equation}
are zero, in order to have solution. Let us compute first 
 \begin{eqnarray}
 \begin{vmatrix}
 z_1&\cdots&z_{N-2}&z^{N-1}\\
 z_1^2&\cdots& z_{N-2}^2&0\\
 \vdots&\ddots&\vdots&\vdots\\
 z_1^{N-1}&\cdots& z_{N-2}^{N-1}&0
 \end{vmatrix} &= & (-1)^{N}z^{N-1}\left(\prod_{\alpha=1}^{N-2} z_\alpha^2\right) \times \nonumber \\ 
& \times & \begin{vmatrix}1&\cdots&1\\ z_1&\cdots& z_{N-2}\\ \vdots&\ddots&\vdots\\ z_1^{N-3}&\cdots & z_{N-2}^{N-3}\end{vmatrix}\,.
 \end{eqnarray}
In order to be zero, it is necessary that one of the $z_\alpha$ be zero. Let us assume, without loss of generality, that $z_{N-2}$=0. We now compute
\begin{widetext}
\begin{equation}
 \begin{vmatrix}1&\cdots&1&1&0\\
 z_1&\cdots&z_{N-3}&0&z^{N-1}\\
 z_1^2&\cdots& z_{N-3}^2&0&0\\
 \vdots&\ddots&\vdots&\vdots&\vdots\\
 z_1^{N-2}& \cdots& z_{N-3}^{N-2}&0&1\end{vmatrix}
 =
 z^{N-1}\begin{vmatrix}
 z_1^2&\cdots& z_{N-3}^2 \\
 \vdots&\ddots&\vdots \\
 z_1^{N-2}& \cdots& z_{N-3}^{N-2}\end{vmatrix}=z^{N-1}\left(\prod_{\alpha=1}^{N-3}z^2_\alpha\right)  \begin{vmatrix}1&\cdots&1\\ z_1&\cdots& z_{N-3}\\ \vdots&\ddots&\vdots\\ z_1^{N-4}&\cdots & z_{N-3}^{N-4}\end{vmatrix}\,.
 \end{equation} 
 \end{widetext}
In order to be zero, we would need another $z_\alpha$ to be zero, but this case is excluded by hypothesis.
If we assume now that $x_{N-1}=0$, then a similar argument brings us to the conclusion that the optimal bond dimension is $N-1$.

\section{Parent Hamiltonians}
\subsection{Introduction}
Given a state \( \psi \) we say that a Hamiltonian \( H \) is a parent Hamiltonian of \( \psi \) if this state is a possible ground state of \( H \), not necessarily unique \cite{Mi11}. For finite-dimensional systems, a trivial parent Hamiltonian can be built by simply considering the projector onto the state \( \psi \), denoted by \( P_\psi \). In fact, for any positive \( \alpha \), the Hamiltonian \( H= \alpha\left(1-P_{ \psi}\right) \) has \(\psi\) as its only ground state. The concept of parent Hamiltonian comes in as useful when it can be written in terms of short range or very well-controlled long-range interactions. In the MPS representation, the considered state is written as locally as it is possible, and the minimal bond dimension for an MPS provides us with some control over the range of the Hamiltonian in question. One systematic way of searching for a given parent Hamiltonian is to consider the \( M \)-particle reduced density matrices, that, for some number \( M \) on, will not be full rank. It is enough to identify the projectors onto the kernels, and compute the intersection of their complements. The projector on the complement to that intersection is a parent Hamiltonian for the state at hand, with shorter interaction range.

\subsection{Parent Hamiltonians for symmetric states}
The case of symmetric states is particularly simple, since all \( M \)-particle reduced density matrices $\rho^{(M)}$ are identical. Let us give the name \( P_M \) to the \( M \)-particle projector onto that kernel of $\rho^{(M)}$. Notice that \( P_M \) is an \( M \) particle operator. Acting by translation on \( P_M\otimes \mathbf{1}^{\otimes(N-M)} \) (i.e. producing \( \mathbf{1}\otimes P_M\otimes \mathbf{1}^{\otimes(N-M-1)} \), \( \mathbf{1}^{\otimes2}\otimes P_M\otimes \mathbf{1}^{\otimes(N-M-2)} \), and succesively), and adding the translated terms, we obtain a parent Hamiltonian for the state at hand, with interaction length \( M \).

Let \( \rho^{(M)} \) denote the reduced density matrix for \( M \) qubits in this symmetric case, that is \( \rho^{(M)}= \mathrm{Tr}_{N-M}\left(|\psi_N\rangle\langle\psi_N|\right) \). Assume that \( |\psi_N\rangle \) has optimal bond dimension \( D \), i.e., it can be written as \( |\psi_N\rangle=\sum_{k=1}^D |x_k\rangle^{\otimes N} \) for some non-normalised vectors \( \left\{|x_k\rangle\right\}_{k=1}^D \), which are pairwise linearly independent.  We have that
\begin{equation}
\rho^{(M)}=\sum_{k,j=1}^D\left(\langle x_j|x_k\rangle\right)^{N-M} \left(|x_k\rangle\langle x_j|\right)^{\otimes M}\,.
\end{equation}
It follows that \( \mathrm{rank}\left[ \rho^{(M)}\right]\leq D \), since the image of $\rho^{(M)}$ is included in the span of \( \left\{|x_k\rangle^{\otimes M}\right\} \). This gives us a first bound on the interaction range. In fact, \( \rho^{(M)} \) acts on the space of \( M \) qubits, which is a linear space of dimension \( 2^M \). If the rank is smaller than the dimension, the kernel is not trivial. Clearly, if \( M>\log_2D \) it follows that the kernel of \( \rho^{(M)} \) is not trivial, and thus we know that the interaction length needed will be at most \( \lfloor\log_2 D\rfloor+1 \). 

Notice that the previous bound has been obtained allowing to project onto the whole linear space of \( M \) qubits, being the vectors symmetric or not. If we allow the projection onto the orthogonal part of the symmetric space, we can obtain a parent Hamiltonian whose interaction range is two, and whose ground state space includes the whole symmetric space. For instance, it is enough to consider the projection onto the state \( |01\rangle-|10\rangle \). However, our goal is to build parent Hamiltonians able to discriminate amongst symmetric states. Therefore, we derive a bound on the interaction length for non-trivial parent Hamiltonians, whose ground state space contains the orthogonal part of the symmetric space.

\subsection{Generic states and sharper bound for interaction lengths}\label{sec:cattle}

In the following, we prove the bound \( n\leq\lfloor{N/2}\rfloor+1 \) for the interaction length of parent Hamiltonians built considering only the symmetric space. We prove it by using a technique translated from algebraic geometry \cite{Landsberg:2010hl}. Given an \( N \)-qubit symmetric state \( |\psi_N\rangle \) we define for each \( M \) from 1 to \( N-1 \) the linear transformation \( \Psi_M \) that maps \( M \)-qubit  symmetric bra vectors, \(\langle\phi_M|  \), to \( (N-M) \)-qubit symmetric ket vectors by
 \begin{equation}
 \Psi_M\left(\langle\phi_M|\right)=\langle\phi_M|\psi_N\rangle\,.
 \end{equation}
Given bases in the space of symmetric states of \( M \) qubits, \( \{|e_M^k\rangle\}_{k=1}^{M+1} \), and of \( N-M \) qubits, \( \{|e_{N-M}^k\rangle\}_{k=1}^{N-M+1} \), the matrix elements for the linear transformation \( \Psi_M\) are computed to be
 \begin{equation}\label{eq:psimatrixel}
 \left(\Psi_M\right)_{lk}=\left(\langle e_{N-M}^l|\otimes\langle e_M^k|\right)|\psi_N\rangle\,.
 \end{equation}
 The reduced density matrix for \( N-M \) qubits is now written in the basis \( \{|e_{N-M}^k\rangle\}_{k=1}^{N-M+1} \) as
 \begin{equation}
 \rho^{(N-M)}_{lm}=\left(\Psi_M\right)_{lk}\left(\Psi^{\dag}_M\right)_{km}\,.
 \end{equation}
 It follows that the rank of \( \rho^{(N-M)} \) is precisely that of the linear transformation \( \Psi_M \). Furthermore, we have also shown that \( \mathrm{rank}\left[ \rho^{(N-M)}\right]=\mathrm{rank}\left[ \rho^{(M)}\right] \), since, from equation (\ref{eq:psimatrixel}), we can conclude that, as matrices,
 \begin{equation}
 \Psi_{N-M}=\Psi_M^T\,.
 \end{equation}
The interaction length is the longest when the ranks of all the reduced density matrices achieve their maximal possible value. For values of \( M \) from 1 to a maximum to be determined, the reduced density matrix is full-rank in the symmetric space when \( \mathrm{rank}\left[ \rho^{(M)}\right]=M+1 \). On the other hand, since \(\left[ \rho^{(N-M)}\right]=\mathrm{rank}\left[ \rho^{(M)}\right] \) in all cases, we have that \( \rho^{(N-1)} \) has rank at most 2, \( \rho^{(N-2)} \)  at most 3, and so on. Thus, the last possible value of \( M \) for which it is possible for \( \rho^{(M)} \) to be full rank is \( \lfloor{N/2}\rfloor \). Namely,
 \begin{eqnarray}
 \mathrm{rank}\left[ \rho^{(\lfloor{N/2}\rfloor+1)}\right] &=& \mathrm{rank}\left[ \rho^{(\lceil{N/2}\rceil-1)}\right] \nonumber \\ & \leq &\lceil{N/2}\rceil\leq \lfloor{N/2}\rfloor+1\,.
 \end{eqnarray}
 Therefore, with certainty, \( \mathrm{ker}\left[ \rho^{(\lfloor{N/2}\rfloor+1)}\right] \) is not empty, and we conclude that, for all states, there will be parent Hamiltonians with interaction lengths smaller or equal to \( \lfloor{N/2}\rfloor+1 \).
 
Notice that the states whose optimal bond dimension is larger than \( \lfloor{N/2}\rfloor \) will show interaction lengths smaller than the optimal bond dimension. Although we shall not expand on the concept here, it is true that the set of states with optimal bond dimension smaller or equal to  \( \lfloor{N/2}\rfloor+1 \) are dense in the space of states, and we may call it the set of \emph{generic} states.
 
\subsection{Examples of parent Hamiltonians}
In all successive examples, we will have to supplement the Hamiltonians with a two-particle term that implements the restriction to the symmetric space; i.e. the two-particle projector onto the spin 0 sector, the singlet. This is given by 
\begin{equation}
\left(P_0\right)_{i,i+1}= \frac{1}{4}\left(1- \boldsymbol{ \sigma}_i\cdot\boldsymbol{ \sigma}_{i+1}\right)
\end{equation}

\subsubsection{  \(\text{GHZ}_N\) state }
The \( M \)-particle reduced density matrix for the \( | \mathrm{GHZ}_N\rangle \) state is
\begin{equation}
\rho^{(M)}(\mathrm{GHZ}_N)= \frac{1}{2}\left( |0\rangle\langle0|^{\otimes M}+|1\rangle\langle1|^{\otimes M}\right)\,,
\end{equation}
which is always of rank two. The one-particle reduced density matrix is maximally mixed, while the two-particle reduced density matrix has the states \( |10\rangle  \) and \( |01\rangle \) in its kernel. The corresponding projector is
\begin{eqnarray}
\left(P_{S=1,m=0}\right)_{i,i+1}&=& \frac{1}{4}\left(1+ \boldsymbol{ \sigma}_i\cdot\boldsymbol{ \sigma}_{i+1}- 2 \sigma^z_i \sigma^z_{i+1}\right)\\
&=&  \frac{1}{4}\left(1+ \sigma_i^x \sigma^x_{i+1}+ \sigma_i^y \sigma^y_{i+1}-  \sigma_i^z \sigma^z_{i+1}\right).\nonumber
\end{eqnarray}

Therefore, we identify the \( \mathrm{GHZ}_N \) state as a ground state of the Hamiltonian
\begin{equation}
H_{\mathrm{GHZ}_N}=J\sum_{i=1}^N \boldsymbol{ \sigma}_i\cdot \boldsymbol{ \sigma}_{i+1}- J_z \sum_{i=1}^N \sigma_i^z \sigma_{i+1}^z\,,
\end{equation}
with the conditions \( J_z>0 \) and \( J_z>2J \).

By construction, it is easy to see that this Hamiltonian presents degeneracy, since \( |0\rangle^{\otimes N} \) and \( |1\rangle^{\otimes N} \) are both eigenstates with the same energy.

\subsubsection{ \( W_N \) state }
We have
\begin{equation}
\rho^{(M)}\left( W_N\right)=  \frac{M}{N}|W_M\rangle\langle W_M|+ \frac{N-M}{M}|0\rangle\langle0|^{\otimes M}\,.
\end{equation}
For this expression to be valid for \( M=1,2 \), define \( |W_2\rangle= \left(|01\rangle+|10\rangle\right)/ \sqrt{2} \) and \( |W_1\rangle=|1\rangle \). Therefore, a possible parent Hamiltonian is
\begin{equation}
H_W=  \frac{ \alpha}{8}\sum_{i=1}^N \left[-2\sigma_i^z+  \sigma_i^z \sigma_{i+1}^z - \frac{2 \beta}{ \alpha} \boldsymbol{ \sigma}_i\cdot \boldsymbol{ \sigma}_{i+1}\right]
\end{equation}
with positive \(\alpha\) and \(\beta\). Also in this case, the ground state is degenerate.

\subsubsection{ \( X_N \) state}
The reduced density matrix for the \( | \mathrm{X}_N(z)\rangle \) state in the \( M<N-1 \) particles case reads
\begin{widetext}
\begin{equation}
\rho^{(M)}\left[X_N(z)\right]= \frac{1}{1+|z|^{2N-2} N} \left\{|1\rangle\langle1|^{\otimes M}+|z|^{2N-2}\left[ M |W_M\rangle\langle W_M|+\left(N-M\right) |0\rangle\langle0|^{\otimes M}\right]\right\}\,. 
\end{equation}
\end{widetext}

Therefore, for all \( 3\leq M<N-1 \) the rank of the reduced matrix is 3.
The case \( M=N-1 \) is different, and it reads
\begin{widetext}
\begin{eqnarray}
\rho^{(N-1)}\left(X_N(z)\right)&=&\frac{1}{1+|z|^{2N-2} N} \left\{|1\rangle\langle1|^{\otimes (N-1)} + |z|^{2N-2}|0\rangle\langle0|^{\otimes (N-1)}+ |z|^{2N-2} (N-1) |W_{N-1}\rangle\langle W_{N-1}| \right. \nonumber \\
&&+\quad\left. \sqrt{N}\left( z^{N-1} |0\rangle\langle 1|^{\otimes (N-1)}+ \bar{z}^{N-1} |1\rangle\langle 0|^{\otimes (N-1)}\right)\right\}\,.
\end{eqnarray}
\end{widetext}
The case \( M=2 \) can be included in the general formula for arbitrary \( M \) with the convention that
\begin{equation}\label{eq:w2}
|W_2\rangle= \frac{1}{ \sqrt{2}}\left(|01\rangle+|10\rangle\right)\,.
\end{equation}
Again, this means that \( \rho^{(2)}\left(X_N\right)\) is of rank three, which is actually the size of the symmetric space for two qubits. Thus, the only element of the kernel of the two-particle density matrix is the projector out of the symmetric space, i.e. the projector onto the singlet. In order to have a local Hamiltonian that discriminates this state, we need to go to the three-particle density matrix. The Dicke state with two excitations is the only subspace of the symmetric space that lies in the kernel of \( \rho^{(3)}\left(X_N\right) \), and provides us with the local Hamiltonian  term \( h^{(3)}=|D_2^{(3)}\rangle\langle D_2^{(3)}|\), of interaction length 3.

In order to compute an alternative form for this local Hamiltonian term, we look at the following operator:
\[\prod_{l=0,l\neq k}^n\left(\sum_{i=1}^n \sigma_i^z-n+2l\right)\,.\]
In the symmetric space, it is zero except when acting on \( |D_k^{(n)}\rangle=0  \), on which it has the value \( \prod_{l=0,l\neq k}^n\left(2l-2k\right) \). Thus, we have the projector
\begin{equation}
P^{(n)}_k= \prod_{l=0,l\neq k}^n\frac{\sum_{i=1}^n \sigma_i^z-n+2l}{2(l-k)}\,,
\end{equation}
which, when restricted to the symmetric space, provides us with the relevant vector. On explicit computation for the case at hand, \( |D_2^{(3)}\rangle \), we find the relevant operator
\begin{equation*}
P^{(3)}_2= \frac{1}{8}\left( 3- \sigma_1^z \sigma_2^z- \sigma_1^z \sigma_3^z- \sigma_2^z \sigma_3^z+ 3 \sigma_1^z \sigma_2^z \sigma_3^z- \sum_{i=1}^3\sigma_i^z\right)
\end{equation*}
which provides us with the parent Hamiltonian presented in the main text.

\end{document}